\newcommand*\patchAmsMathEnvironmentForLineno[1]{%
 \expandafter\let\csname old#1\expandafter\endcsname\csname #1\endcsname
 \expandafter\let\csname oldend#1\expandafter\endcsname\csname end#1\endcsname
 \renewenvironment{#1}%
   {\linenomath\csname old#1\endcsname}%
   {\csname oldend#1\endcsname\endlinenomath}}%
\newcommand*\patchBothAmsMathEnvironmentsForLineno[1]{%
 \patchAmsMathEnvironmentForLineno{#1}%
 \patchAmsMathEnvironmentForLineno{#1*}}%
\newtheorem{tool}{Tool}
\newtheorem{observation}{Observation}
\newtheorem{lemma}{Lemma}
\newtheorem{problem}{Problem}
\newtheorem{theorem}{Theorem}
\newtheorem{definition}{Definition}
\newtheorem{corollary}{Corollary}
\newcommand{\thmheadfont}{\bfseries}
\newenvironment{repeatenv}[2]%
  {\smallskip\noindent {\thmheadfont #1~\ref{#2}.}\ \itshape}
  {\itshape}
\renewcommand{\H}{\ensuremath{\mathcal{H}}\xspace}
\newcommand{\BB}{\ensuremath{\mathcal{BB}\xspace}}
\newcommand{\T}{\ensuremath{\mathcal{T}}\xspace}
\begin{document}

\title{Covering a set of line segments with a few squares}

\author[1]{Joachim Gudmundsson}
\author[2]{Mees van de Kerkhof}
\author[3]{Andr{\'e} van Renssen}
\author[4]{Frank Staals}
\author[5]{Lionov Wiratma}
\author[6]{Sampson Wong}

\renewcommand\Affilfont{\fontsize{10}{10.8}\selectfont}

\affil[1]{University of Sydney, Australia, joachim.gudmundsson@sydney.edu.au}
\affil[2]{Utrecht University, Netherlands, m.a.vandekerkhof@uu.nl}
\affil[3]{University of Sydney, Australia, andre.vanrenssen@sydney.edu.au}
\affil[4]{Utrecht University, Netherlands, f.staals@uu.nl}
\affil[5]{Parahyangan Catholic University, Indonesia, lionov@unpar.ac.id}
\affil[6]{University of Sydney, Australia, swon7907@uni.sydney.edu.au}

\date{}

\maketitle

\begin{abstract}
  We study three covering problems in the plane. Our original
  motivation for these problems come from trajectory analysis. The
  first is to decide whether a given set of line segments can be
  covered by up to $k=4$ unit-sized, axis-parallel squares.
  We give linear time algorithms for $k\leq 3$ and an $O(n \log n)$ time algorithm for $k=4$.

  The second is to build a data structure on a trajectory to
  efficiently answer whether any query subtrajectory is coverable by
  up to three unit-sized axis-parallel squares.  For $k=2$ and $k=3$
  we construct data structures of size $O(n\alpha(n)\log n)$ in
  $O(n\alpha(n) \log n)$ time, so that we can test if an arbitrary
  subtrajectory can be $k$-covered in $O(\log n)$ time.

  The third problem is to compute a longest subtrajectory of a given
  trajectory that can be covered by up to two unit-sized axis-parallel
  squares.  We give $O(n2^{\alpha(n)}\log^2 n)$ time algorithms for $k\leq 2$.
\end{abstract}

\section{Introduction}

Geometric covering problems are a classic area of research in
computational geometry. The traditional \emph{geometric set cover
  problem} is to decide whether one can place $k$ axis-parallel
unit-sized squares (or disks) to cover $n$ given points in the
plane. If $k$ is part of the input, the problem is known to be
NP-hard~\cite{DBLP:journals/ipl/FowlerPT81,DBLP:journals/siamcomp/MegiddoS84}. Thus,
efficient algorithms are known only for small values of~$k$. For $k=2$
or~$3$, there are linear time
algorithms~\cite{drezner1987rectangular,DBLP:conf/compgeom/SharirW96},
and for $k=4$ or~$5$, there are $O(n \log n)$ time
algorithms~\cite{DBLP:conf/issac/Nussbaum97,DBLP:journals/ijcga/Segal99}. For
general $k$, the $O(n^{\sqrt k})$ time algorithm for unit-sized
disks~\cite{DBLP:journals/algorithmica/HwangLC93} 
can be simplified and extended
to unit-sized axis-parallel
squares~\cite{DBLP:journals/algorithmica/AgarwalP02}.

Motivated by trajectory analysis, we study a line segment variant of
the geometric set cover problem where the input is a set of~$n$ line
segments. Given a set of line segments, we say it is
\emph{$k$-coverable} if there exist $k$ unit-sized axis-parallel squares
in the plane so that every line segment is in the union of the $k$
squares (we may write coverable to mean $k$-coverable when $k$ is
clear from the context). The first problem we study in this paper is:

\begin{problem}
  \label{prob:decision}
  Decide if a set of line segments is $k$-coverable, for $k \in O(1)$.
\end{problem}

A key difference in the line segment variant and the point variant is that each segment need not be covered by a single square, as long as each segment is covered by the union of the $k$ squares. See Figure~\ref{fig:01_cover_3}.

\begin{figure}[bt]
    \centering
    \begin{minipage}{0.5\textwidth}
        \centering
        \includegraphics{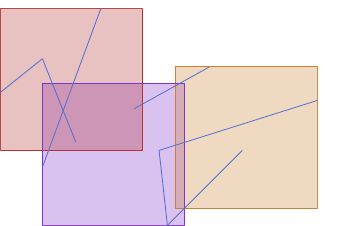}
        \caption{A set of 3-coverable segments.}
        \label{fig:01_cover_3}
    \end{minipage}\hfill
    \begin{minipage}{0.49\textwidth}
        \centering
        \includegraphics{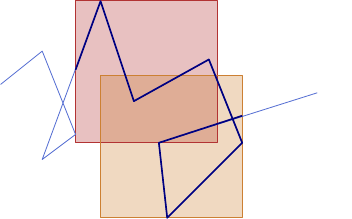}
        \caption{A 2-coverable subtrajectory.}
        \label{fig:03_example}
    \end{minipage}
\end{figure}

Hoffmann~\cite{hoffmann2001covering} provides a linear time algorithm for $k=2$ and $3$, however, a proof was not included in his extended abstract. Sadhu et al.~\cite{DBLP:journals/tcs/SadhuRNR19} provide a linear time algorithm for $k=2$ using constant space. In Section~\ref{sec:problem1}, we provide a proof for a $k=3$ algorithm and a new $O(n \log n)$ time algorithm for~$k=4$. 

Next, we study trajectory coverings. A trajectory $\mathcal{T}$ is a polygonal
curve in the plane parametrised by time. 
Let $v_1,..,v_n$ and $e_1,..,e_{n-1}$ be the vertices and edges of $\T$,
respectively. For two points $s$ and $t$ on \T, we write $s \prec t$
if $s$ occurs on \T before $t$. Any such pair of points then defines a
unique subtrajectory $\T[s,t]$ of \T starting in $s$ and ending in
$t$. See Figure~\ref{fig:03_example} for an example. Trajectories are commonly used to model
the movement of an object (e.g. a bird, a vehicle, etc) through time and space. The analysis of trajectories have
applications in animal ecology~\cite{damiani2014extracting},
meteorology~\cite{stohl2002computation}, and sports
analytics~\cite{gudmundsson2017spatio}. 

To the best of our knowledge, this paper is the first to study
$k$-coverable trajectories for~$k \geq 2$. 

A $k$-coverable trajectory may, for example, model a commonly travelled route, and the squares could model a method of displaying the route (i.e. over multiple pages, or multiple screens), or alternatively, the location of several facilities. We build a data structure that can efficiently decide whether a subtrajectory is $k$-coverable. 

\begin{problem}
    \label{prob:datastructure_decision}
    Construct a data structure on a trajectory, so that given any
    query subtrajectory, it can efficiently answer whether the
    subtrajectory is $k$-coverable, for $k \in O(1)$.
\end{problem}

For $k=2$ and $k=3$ we preprocess a trajectory $\mathcal{T}$ with $n$
vertices in $O(n\log n)$ time, and store it in a data structure of
size $O(n \log n)$, so that we can test if an arbitrary subtrajectory
(not necessarily restricted to vertices) $\mathcal{T}[s,t]$ can be
$k$-covered.

Finally, we consider a natural extension of Problem 2, that is, to calculate the \emph{longest} $k$-coverable subtrajectory of any given trajectory. This problem is similar in spirit to the problem of covering the maximum number of points by $k$ unit-sized axis-parallel squares~\cite{bbdkrs-oscsp-18,mgd-mcti-2008}.

\begin{problem}
  \label{prob:longest}
  Given a trajectory, compute a longest $k$-coverable subtrajectory,
  for $k \in O(1)$.
\end{problem}

Problem~\ref{prob:longest} is closely related to computing a trajectory \emph{hotspot}, which is a small region where a moving object spends a large amount of time. For $k=1$ squares, the existing algorithm by Gudmundsson et~al.~\cite{hotspots2013} computes longest 1-coverable subtrajectory of any given trajectory. We notice a missing case in their algorithm, and show how to resolve this issue in the same running time of~$O(n \log n)$. Finally, we show how to compute the longest 2-coverable subtrajectory of any given trajectory in $O(n 2^{\alpha(n)}\log^2 n)$ time, where $\alpha(n)$ is the extremely slow growing inverse Ackermann function.

\paragraph{Overview} In the next section we consider Problem~\ref{prob:decision}. We build up to the $k=4$ case by first considering the problem for $k\leq 3$. A simple but crucial observation for $k=4$ is that if a set $S$ of segments is $4$-coverable then either (a) one square has to lie in a corner of the bounding box of $S$ or (b) each square has to touch exactly one side of the bounding box of $S$. The first case immediately reduces to the case when $k=3$ which can be solved in linear time, so the focus of Section~\ref{subsec:problem1_k=4} is to solve the second case in $O(n \log n)$ time.

In Section~\ref{sec:problem2} we turn our attention to Problem~\ref{prob:datastructure_decision}. We build four basic data structures (Tools~\ref{tool:bb}--\ref{tool:pl_quadrant}) in $O(n\alpha(n) \log n)$ time that are then combined in Sections~\ref{subsec:problem2_k=2} and~\ref{subsec:problem2_k=3} to produce data structures for $k=2$ and $k=3$, respectively.

Our main technical contributions are in Sections~\ref{sec:A_Longest_1-coverable_subtrajectory} and~\ref{sec:problem3_k=2} where we consider Problem~\ref{prob:longest} for $k\leq 2$. We first note that an earlier algorithm for $k=1$ by Gudmundsson et al.~\cite{hotspots2013} omits a possible scenario. We show how this case can be handled in $O(n \log n)$ time before we show our main result for $k=2$, which is an $O(n 2^{\alpha(n)}\log^2 n)$ time algorithm. 
 
\section{Problem 1: The Decision Problem}
\label{sec:problem1}

We first consider the simple case when $k=2$ to build intuition for the problem and state several basic properties that will be used in later sections.

\subsection{Is a set of line segments 2-coverable?}
\label{subsec:problem1_k=2}

We begin with an observation that applies to any $k$-covering.

\begin{observation}
  \label{obs:covering_touches_all_sides}
  Every $k$-covering of $S$ must touch all four sides of $\BB(S)$.
\end{observation}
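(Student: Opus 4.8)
The plan is to exploit the minimality of the bounding box. Write the bounding box as $[x_{\min},x_{\max}]\times[y_{\min},y_{\max}]$, where $x_{\min}$ is the smallest $x$-coordinate occurring among all points of all segments, and $x_{\max},y_{\min},y_{\max}$ are defined analogously. Since the input is a finite set of (compact) segments, each of these four extrema is attained: there is a point $p_L$ lying on some segment with $x$-coordinate exactly $x_{\min}$, and similarly points $p_R,p_T,p_B$ attaining $x_{\max},y_{\max},y_{\min}$. Each such point lies on the corresponding edge of the bounding box; for instance $p_L$ has $x$-coordinate $x_{\min}$ and $y$-coordinate in $[y_{\min},y_{\max}]$, so it lies on the left edge.

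First I would justify that the extrema are genuinely attained on the edges. If, say, no point of any segment had $x$-coordinate equal to $x_{\min}$, then the rectangle could be shrunk slightly from the left while still containing every segment, contradicting that the bounding box is the \emph{smallest} axis-parallel rectangle containing all the segments. Hence each of the four edges really does carry a point of the configuration. Then I would invoke the definition of a covering directly: the union of the squares contains every segment, and in particular contains each of the four extreme points $p_L,p_R,p_T,p_B$.

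Since each extreme point lies on its corresponding edge, the union of squares has nonempty intersection with all four edges, which is exactly the statement that the covering touches all four sides of the bounding box. The argument is an immediate consequence of the two definitions, so I do not expect a substantial obstacle; the only points requiring a little care are the attainment of the extrema (which follows from compactness of a finite union of segments) and fixing the meaning of \emph{touch} as ``has nonempty intersection with the closed edge.''
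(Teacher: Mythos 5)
Your proof is correct and follows essentially the same argument as the paper: the paper's (informal) justification is the contrapositive of yours, namely that a covering missing one side of the bounding box would fail to cover the extreme point of the segments lying on that side. Your additional care about attainment of the extrema is a fine but minor elaboration of the same idea.
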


The reasoning behind Observation~\ref{obs:covering_touches_all_sides}
is simple: if the covering does not touch one of the four sides, say
the left side, then the covering could not have covered the leftmost
vertex of the set of segments. An intuitive way for two squares to
satisfy Observation~\ref{obs:covering_touches_all_sides} is to place
the two squares in opposite corners of the bounding box. This intuition is formalised in Lemma~\ref{lem:2-covering_corner}.

\begin{lemma}[Sadhu et al.~\cite{DBLP:journals/tcs/SadhuRNR19}]
\label{lem:2-covering_corner}
A set $S$ of segments is 2-coverable if and only if there is a
covering with squares in opposite corners of $\BB(S)$.
\end{lemma}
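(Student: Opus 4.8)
The plan is to prove the biconditional by noting that one direction is trivial and focusing on the substantive direction.
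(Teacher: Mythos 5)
Your proposal contains no proof: it only announces the intended structure (one trivial direction, one substantive direction) and then stops. Identifying that the ``if'' direction is immediate---a covering with two unit squares in opposite corners of the bounding box is, in particular, a 2-covering---is correct, but that observation is not the content of the lemma. The entire substance lies in the ``only if'' direction: given an \emph{arbitrary} 2-covering, one must show that it can be replaced by a covering whose squares sit in opposite corners of the bounding box, and you give no argument whatsoever for this.

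For comparison, the paper itself does not reprove this lemma (it cites Sadhu et al.), but the proof it gives for the analogous 3-square statement (Lemma~\ref{lem:3-covering_corner}) shows exactly what is required. By Observation~\ref{obs:covering_touches_all_sides}, every covering touches all four sides of the bounding box; with only two squares, the pigeonhole principle yields a square touching at least two sides. One must then split into cases: if the two sides are opposite, the bounding box has width or height at most one and the problem degenerates; if they are adjacent (say top and left), one argues that sliding that square into the top-left corner can only enlarge the portion of the bounding box it covers, so no segment becomes uncovered. A symmetric argument then forces the second square to handle the remaining two sides and pushes it into the opposite corner. None of these steps---the pigeonhole count, the case analysis, or the exchange argument showing that coverage is preserved when a square is slid into a corner---appears in your proposal, so as written it cannot be evaluated as a proof.
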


It suffices to check the two
configurations where squares are in opposite corners of the bounding
box. For each of these two configurations, we simply check if each
segment is in the union of the two squares, which takes linear time in
total, leading to the following theorem:

\begin{theorem}
\label{thm:2-covering}
One can compute a $2$-covering of a set of $n$ line
  segments, or report that no such covering exists, in $O(n)$ time.
\end{theorem}

\subsection{Is a set of line segments 3-coverable?}
\label{subsec:problem1_k=3}

The following lemma is analogous to Lemma~\ref{lem:2-covering_corner}, but for the $k=3$ case.

\begin{lemma}
\label{lem:3-covering_corner}
A set of segments $S$ is 3-coverable if and only if there is a covering
with a square in a corner of the bounding box $\BB(S)$.
\end{lemma}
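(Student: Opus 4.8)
The forward (``if'') direction is immediate: a covering with a square in a corner is in particular a covering by three squares, so the set is $3$-coverable. All the work is in the reverse direction, and the plan is to start from an arbitrary $3$-covering $\{S_1,S_2,S_3\}$ and edit it into a $3$-covering having a square in a corner, without ever losing coverage. Throughout I write the bounding box as $B=[x_{\min},x_{\max}]\times[y_{\min},y_{\max}]$ and use that every segment lies inside $B$.

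The first step is to locate a square touching two sides. By Observation~\ref{obs:covering_touches_all_sides} each of the four sides of $B$ is touched by at least one of the three squares; assigning to each side one square that touches it and applying the pigeon-hole principle yields a square $S$ that touches (at least) two sides. There are then two cases, according to whether these two sides are adjacent or opposite, and these are exhaustive.

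In the adjacent case, say $S$ touches the left and bottom sides, so $S=[a,a+1]\times[b,b+1]$ with $a\le x_{\min}\le a+1$ and $b\le y_{\min}\le b+1$. The key point is a containment: since $a+1\le x_{\min}+1$ and $b+1\le y_{\min}+1$, and since all segments lie in $B$, the relevant part of $S$ sits inside the corner square $C=[x_{\min},x_{\min}+1]\times[y_{\min},y_{\min}+1]$, that is $S\cap B\subseteq C$. Hence replacing $S$ by $C$ covers everything $S$ did, and the result is a $3$-covering with a square in a corner. The other three corners are handled symmetrically.

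The opposite case is where I expect the main obstacle, because there the square $S$ is not near any corner and a direct ``push into the corner'' fails. Suppose $S$ touches the left and right sides; as $S$ has width $1$, this forces $x_{\max}-x_{\min}\le 1$, so $B$ is thin in the $x$-direction. The plan is to exploit this thinness to relocate a possibly \emph{different} square into a corner: pick the square $S'$ of the covering touching the bottom side (it exists by Observation~\ref{obs:covering_touches_all_sides}), write $S'=[a',a'+1]\times[b',b'+1]$ with $b'\le y_{\min}\le b'+1$, and note that thinness bounds the $x$-extent of $S'\cap B$ by $[x_{\min},x_{\min}+1]$ while $b'+1\le y_{\min}+1$ bounds its $y$-extent, giving $S'\cap B\subseteq C$ exactly as in the adjacent case. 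Replacing $S'$ by $C$ then yields a corner covering; the situation where $S$ touches top and bottom (so $B$ is thin in $y$) is identical after exchanging the roles of the axes. The crux of the whole argument is thus the single containment $S\cap B\subseteq C$, which the adjacent case gets from $S$ touching two adjacent sides and the opposite case recovers from the forced thinness of the bounding box.
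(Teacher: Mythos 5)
Your proof is correct, and its core is the same as the paper's: apply Observation~\ref{obs:covering_touches_all_sides} and the pigeon-hole principle to find a square touching two sides of the bounding box, then split on whether those sides are adjacent or opposite, and in the adjacent case push that square into the corner (your containment $S\cap B\subseteq C$ is exactly the paper's ``the new square covers more than the initial position,'' made precise with coordinates). The genuine difference is the opposite-sides case. There the paper does not actually exhibit a corner covering: it only remarks that the bounding box then has width at most one and that ``there is a straightforward one-dimensional algorithm,'' i.e.\ it handles that case algorithmically rather than proving the stated equivalence. You instead close the case within the lemma itself: thinness of the box ($x_{\max}-x_{\min}\le 1$) lets you run the same containment argument on whichever square touches the bottom side, relocating it to the bottom-left corner. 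This buys a self-contained proof of the ``only if'' direction in all cases, exactly as the lemma is stated, whereas the paper's version quietly shifts the thin-box case onto the algorithm in Section~\ref{subsec:problem1_k=3}; your argument is the more complete of the two.
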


\begin{proof}
    By Observation~\ref{obs:covering_touches_all_sides}, any 3-covering of $S$ touches all four sides of the bounding box. By the pigeon-hole principle, one of these square must intersect at least two
  sides of $\BB(S)$.

  Consider if these two sides are adjacent. Without
  loss of generality they are the left and top sides of
  $\BB(S)$. If the top-left corner of \H already coincides with the
  top-left corner of $\BB(S)$ the lemma statement holds. If not, the
  top-left corner of \H lies outside of $\BB(S)$, and thus we can
  shift \H to make the two top-left corners coincide.

  This
  increases the area of $\BB(S)$ covered by \H, and hence the
  3-covering remains valid.

  Consider if these two sides are opposite. Without loss of generality they are the top and bottom sides of $\BB(S)$. Consider the square that intersects the
  left side of $\BB(S)$ and shift it to coincide with the top-left
  corner of $\BB(S)$. Since the height of $\BB(S)$ is at most one, we
  still have a valid 3-covering, and this square intersects
  three sides of $\BB(S)$.
\end{proof}

It suffices to consider four cases, one for each corner of the bounding box. After placing the first square in one of the four corners, we subdivide each segment into at most one subsegment that is covered by the first square, and up to two subsegments that are not yet covered. Finally, we use Theorem~\ref{thm:2-covering} to decide whether the final two squares can cover all remaining subsegments.

Subdividing each segment takes linear time in total. There are at most a linear number of remaining subsegments. Checking if the remaining segments are 2-coverable takes linear time by Theorem~\ref{thm:2-covering}. Hence:

\begin{theorem}
\label{thm:3-covering}
One can compute a $3$-covering of a set of $n$ line
  segments, or report that no such covering exists, in $O(n)$ time.
\end{theorem}

\subsection{Is a set of line segments 4-coverable?}
\label{subsec:problem1_k=4}

By Observation~\ref{obs:covering_touches_all_sides}, the four squares of a 4-covering must touch all four sides of the bounding box. We have two cases. In the first case, we have a 4-covering with a square in a corner of the bounding box. In the second case, we have a 4-covering with each square touching exactly one side of the bounding box.

In the first case we can use the same strategy as in the three squares case by placing the first square in a corner and then (recursively) checking if three additional squares can cover the remaining subsegments. This gives a linear time algorithm for the first case.

\begin{figure}[tb]
    \centering
    \begin{minipage}{0.46\textwidth}
        \centering
        \vspace{5pt}
        \includegraphics{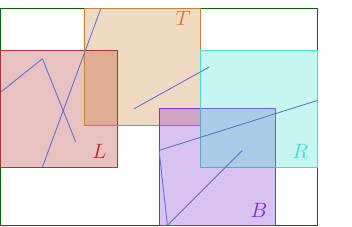}
        \vspace{0pt}
        \caption{The squares $L$, $T$, $B$ and $R$.}
        \label{fig:14_one_per_side}
    \end{minipage}\hfill
    \begin{minipage}{0.53\textwidth}
        \centering
        \includegraphics{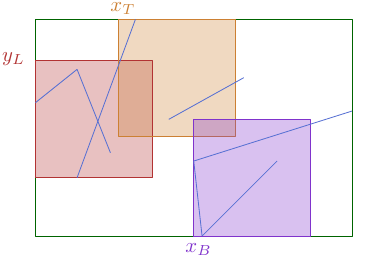}
        \caption{The variables $y_L$, $x_T$ and $x_B$.}
        \label{fig:14_xb_summary}
    \end{minipage}
\end{figure}

For the remainder of this section, we focus on solving the second
case. Define $L$, $B$, $T$, and $R$ to be the square that touches the
left, bottom, top and right side of the bounding box of $S$,
respectively. See Figure~\ref{fig:14_one_per_side}.  Without loss of
generality, suppose that $T$ is to the left of $B$. This implies that
the left to right order of the squares is $L$, $T$, $B$, $R$. Suppose
for now there was a way to compute the initial placement of $L$. Then
we can deduce the position of $T$ as follows.

\begin{lemma}
  \label{lem:placing_T}
  Given the position of $L$, if three additional squares can be placed
  to cover the remaining subsegments, then it can be done with $T$ in
  the top-left corner of the bounding box of the remaining subsegments.
\end{lemma}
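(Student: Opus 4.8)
The plan is to prove the lemma by a direct exchange argument. I would start from an \emph{arbitrary} valid placement of the three remaining squares $T,B,R$ that covers the set $\mathcal{S}$ of subsegments left uncovered by $L$, and show that $T$ can be moved into the top-left corner of the bounding box $\mathcal{B}'$ of $\mathcal{S}$ without uncovering anything, leaving $B$ and $R$ untouched. Write $x_{\min}$ for the left side of $\mathcal{B}'$ and $y_{\max}$ for its top side, and let $Y_{\max}$ denote the top side of the original bounding box. Let $T^\ast = [x_{\min}, x_{\min}+1] \times [y_{\max}-1, y_{\max}]$ be the candidate unit square in the top-left corner of $\mathcal{B}'$. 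The whole statement then reduces to a single containment: every point of $\mathcal{S}$ covered by the original $T$ is also covered by $T^\ast$. Indeed, each $p \in \mathcal{S}$ is covered by $T$, $B$, or $R$; the points covered by $B$ or $R$ are unaffected by the move, and the points covered by $T$ land in $T^\ast$, so $T^\ast, B, R$ remains a covering of $\mathcal{S}$.

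First I would pin down the horizontal extent of $T$. Since the left-to-right order of the four squares is $L, T, B, R$, square $T$ has the smallest left edge among $T, B, R$ (for unit squares, center order and left-edge order coincide). The leftmost point of $\mathcal{S}$ lies at $x = x_{\min}$ and, being in $\mathcal{S}$, is not covered by $L$; hence one of $T, B, R$ covers it, so at least one of the three has its left edge at most $x_{\min}$, and therefore $T_{\mathrm{left}} \le x_{\min}$. Consequently, any point of $\mathcal{S}$ covered by $T$ has $x$-coordinate in $[x_{\min}, x_{\min}+1]$: it is at least $x_{\min}$ because it lies in $\mathcal{B}'$, and at most $T_{\mathrm{left}}+1 \le x_{\min}+1$ because it lies in $T$. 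This is exactly the $x$-range of $T^\ast$.

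Next I would handle the vertical extent. By definition $T$ is the square touching the top side, so its $y$-range contains $Y_{\max}$; as $T$ is a unit square this forces its bottom edge into $[Y_{\max}-1, Y_{\max}]$. Thus every point covered by $T$ has $y$-coordinate at least $Y_{\max}-1 \ge y_{\max}-1$, while every point of $\mathcal{S}$ has $y$-coordinate at most $y_{\max}$. Hence such a point has $y$-coordinate in $[y_{\max}-1, y_{\max}]$, the $y$-range of $T^\ast$. Combining the two coordinates gives the desired containment, completing the exchange.

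The step I expect to be most delicate is the vertical one, since it hinges on correctly relating the top $y_{\max}$ of $\mathcal{B}'$ to the top $Y_{\max}$ of the original bounding box and on using that $T$ genuinely reaches $Y_{\max}$. It is worth recording explicitly (and it makes the argument cleaner) that in case (ii) of Lemma~\ref{lem:4-coverable} the square $L$ touches only the left side and therefore cannot reach $Y_{\max}$; hence the topmost input point survives in $\mathcal{S}$ and in fact $y_{\max} = Y_{\max}$. The only remaining care is definitional: ensuring ``$T$ to the left of $B$'' is read so that $T$ is truly the leftmost of the three squares, and dismissing the degenerate case where $\mathcal{S}$ is empty, for which the statement is vacuous.
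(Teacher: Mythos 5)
Your proof is correct and takes essentially the same route as the paper's: both are exchange arguments whose key step is that $T$, being the leftmost of the three remaining squares, must have its left edge at or to the left of the left side of the bounding box of the uncovered subsegments (otherwise the leftmost uncovered point would be missed by all of $T$, $B$, $R$), after which $T$ can be replaced by the corner square without losing coverage. If anything, your write-up is slightly more complete, since the paper argues only the horizontal slide explicitly and leaves the vertical containment (your $Y_{\max}$ versus $y_{\max}$ step, using that $T$ touches the top of the original bounding box) implicit.
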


  \begin{figure}[tb]
    \centering
    \includegraphics{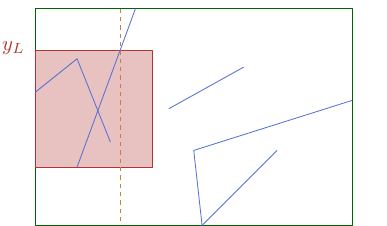}
    \caption{The best position for $T$ is in the top left corner of
      the bounding box of the remaining subsegments.}
    \label{fig:14_xt_10}
  \end{figure}

\begin{proof}
  Consider the bounding box of the subsegments not covered by $L$, in
  particular, the left side of said bounding box. By definition, there
  exists a left endpoint of a remaining subsegment that lies on this
  left side $\ell$ (the dashed line segment in
  Figure~\ref{fig:14_xt_10}). Suppose that the left side of $T$ is to
  the right of $\ell$: then the squares $B$ and $R$ would be even
  further right than $T$, and no square would cover this left endpoint
  of a remaining subsegment. Hence, if a covering exists, the left
  side of $T$ cannot be to the right of $\ell$.

  Suppose that there exists a covering where the left side of $T$ is
  to the left of $\ell$. Then we can shift $T$ so that its left side
  is aligned with $\ell$. As in the proof of
  Lemma~\ref{lem:3-covering_corner} the new position covers more of
  the area of inside the bounding box of the remaining subsegments,
  and hence we still have a valid covering. Furthermore, since $T$ is
  also incident to the top-side of $\BB(S)$ (and $L$ is not incident
  to $\BB(S)$) it follows that the top-left corner of $T$ now
  coincides with the top-side of the bounding box of the remaining
  subsegments as desired.
\end{proof}

After placing the first two squares, we can place
$B$ in the bottom-left corner of the bounding box of the remaining
segments, for reasons analogous to Lemma~\ref{lem:placing_T}. Finally, we cover the remaining segments with $R$, if
possible.

It follows that the position of $L$ along the left boundary uniquely
determines the positions of the squares $T$, $B$ and $R$ along their
respective boundaries. Unfortunately, we do not know the position of
$L$ in advance, so instead we consider all possible initial positions
of $L$ via parametrisation. Let $y_L$ be the $y$-coordinate of the top
side of $L$, and similarly let $x_T$, $x_B$ be the $x$-coordinates of
the left side of $T$ and $B$, respectively. See
Figure~\ref{fig:14_xb_summary}.

Finally, we will try to cover all remaining subsegments with the
square $R$. Define $x_{R_1}$ and $x_{R_2}$ to be the $x$-coordinates
of the leftmost and rightmost uncovered points after the first three
squares have been placed. Similarly, define $y_{R_1}$ and $y_{R_2}$ to
be the $y$-coordinates of the topmost and bottommost uncovered
points. Then it is possible to cover the remaining segments with $R$
if and only if $x_{R_1} - x_{R_2} \leq 1$ and
$y_{R_1} - y_{R_2} \leq 1$.

Since the position of $L$ uniquely determines $T$, $B$ and $R$, we can
deduce that the variables $x_T$, $x_B$, $x_{R_1}$, $x_{R_2}$,
$y_{R_1}$ and $y_{R_2}$ are all functions of $y_L$. We will show that
each of these functions is piecewise linear and can be computed in
$O(n \log n)$ time. We begin by computing $x_T$ as a function of
variable $y_L$.

Let $s \in S$ be a segment, we define
$f_s(y) = \min_{p \in s \land p_y \geq y} p_x$ as the leftmost point
on $s$ above the horizontal line at height $y$, and
$f(y) = \min_{s \in S} f_s(y)$ as the minimum over all segments
$s$. We refer to (the graph of) $f$ as the \emph{skyline} of
$S$.

\begin{lemma}
  \label{lem:skyline}
  The skyline $f$ of a set $S$ of $n$ segments is a piecewise linear, monotonically increasing
  function with $O(n \alpha(n))$ pieces and can be computed in
  $O(n \log n)$ time, where $\alpha(n)$ is the inverse Ackermann
  function. 
\end{lemma}

  \begin{figure}[tb]
    \centering
    \includegraphics[width=0.3\textwidth]{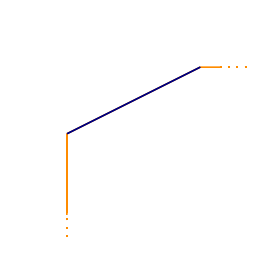}
    \includegraphics[width=0.3\textwidth]{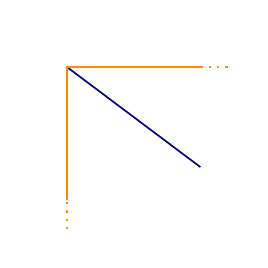}
    \includegraphics[width=0.35\textwidth]{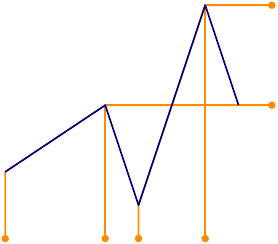}
    \caption{The skyline of a segment with positive gradient (left), a segment with negative gradient (middle) and a set of segments (right).
    }
    \label{fig:skyline}
  \end{figure}

\begin{proof}
  We first compute the skyline of each individual segments. If the segment has positive gradient, then the skyline consists of three pieces. If the segment has negative gradient, then the skyline has two pieces. See Figure~\ref{fig:skyline}, (left) and (middle). Next, we merge the individual skylines into a combined skyline. For any
  $\ell$, the function $f(\ell)$ takes the value of the leftmost
  intersection of the individual skylines with $\ell$, i.e. the upper envelope except in the leftwards cardinal direction
  rather than the upwards direction. See
  Figure~\ref{fig:skyline}, (right). 

    The skylines of individual segments can be computed in $O(n)$
  time, and have a total size of $O(n)$. The leftwards envelope of the individual skylines has at most
  $O(n \alpha(n))$ pieces and can be computed in $O(n \log n)$
  time~\cite{DBLP:books/daglib/davenportschinzel}. The leftwards envelope is piecewise linear and a monotonically increasing function.
\end{proof}

Now we can apply Lemma~\ref{lem:skyline} to compute $x_T$ as a function of $y_L$.

\begin{lemma}
  \label{lem:x_t}
  The variable $x_T$ as a function of variable $y_L$ is a piecewise
  linear, monotonically increasing function of complexity $O(n\alpha(n))$ and can be computed in
  $O(n \log n)$ time.
\end{lemma}

\begin{figure}[tb]
    \centering
        \includegraphics{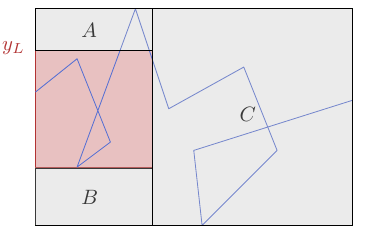}
        \includegraphics[width=0.35\textwidth]{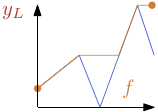}
        \caption{The regions $A$, $B$ and $C$ are above, below, and to the left of the square~$L$ (left). The skyline is the leftmost point of region
          $A$ as $y_L$ increases (right).}
        \label{fig:xt}
\end{figure}

\begin{proof}
We divide the region inside the bounding box and not covered by $L$ into three subregions. The region $A$ is above $L$, the region $B$ is below $L$ and the region $C$ is to the right of $L$. See Figure~\ref{fig:xt} (left). As $y_L$ increases, the leftmost point of $A$ follows the skyline of the segments in $A$. See Figure~\ref{fig:xt}, (right). Analogously, as $y_L$ increases, the leftmost point of $B$ follows the skyline of the segments in $B$, except that the skyline is taken in the downward direction instead. Finally, as $y_L$ increases, the leftmost point of $C$ is a constant. By Lemma~\ref{lem:skyline}, the leftmost points of $A$, $B$ and $C$ are all piecewise linear functions with complexities $O(n \alpha(n))$ that can be computed in $O(n \log n)$ time. The value of $x_T$ is the minimum of these functions, so is a piecewise linear function of complexity $O(n\alpha(n))$ and can be computed in $O(n \log n)$ time.
\end{proof}

Next, we show that $x_B$ is a piecewise linear function of $y_L$, with
complexity $O(n\alpha(n))$, and can be computed in $O(n \log n)$ time.

\begin{lemma}
  \label{lem:x_b}
  The variable $x_B$ as a function of variable $y_L$ is a piecewise
  linear function of complexity $O(n\alpha(n))$ and can be computed in
  $O(n \log n)$ time.
\end{lemma}

\begin{figure}[tb]
  \centering
  \includegraphics{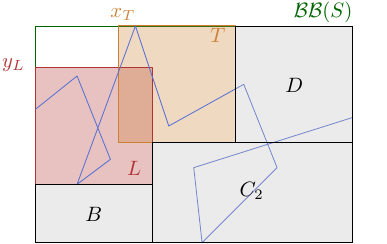}
  \caption{The regions $B$, $D$ and $C_2$ as defined by $L$ and $T$.}
  \label{fig:four_hotspot_b_d_c2}
\end{figure}

\begin{proof}
For an analogous reason to Lemma~\ref{lem:placing_T}, we can place $B$ in the bottom-right corner of the remaining segments after placing $L$ and $T$. Therefore, $x_B$ is the $x$-coordinate of the leftmost uncovered point after placing $L$ and $T$. Divide the remaining region into three subregions: $B$ below $L$, $D$ to the right of $T$, and $C_2$ to the right of $L$ and below $T$. See Figure~\ref{fig:four_hotspot_b_d_c2}. 

As $y_L$ increases, the leftmost point of $B$ moves monotonically to
the left, and follows the skyline of the set of segments, except that
the skyline is taken in the downwards direction. 
By Lemma~\ref{lem:skyline}, the leftmost point of $B$ is a piecewise linear
function in terms of $y_L$ and can be computed in $O(n \log n)$ time.

As $y_L$ increases, the position of square $T$, given by the variable $x_T$ moves monotonically to the right, and follows a skyline of the segments. We have shown in Lemma~\ref{lem:x_t} that $x_T$ is a piecewise linear function in terms of $y_L$ and can be computed in $O(n \log n)$ time. Similarly, the leftmost point of $D$ is a piecewise linear function in terms of the position of $T$ and can be computed in $O(n \log n)$ time. We compose the leftmost point of $D$ as a piecewise linear function of $x_T$, and $x_T$ as a piecewise linear function of $y_L$. By Lemma~\ref{lem:skyline}, each of these functions are monotonically increasing functions with complexity $O(n \alpha(n))$. Composing two monotonic, piecewise linear functions requires a single simultaneous sweep of the two functions, which takes $O(n \alpha(n))$ time. Hence, the leftmost point of $D$ is a piecewise linear function and can be computed in $O(n \log n)$ time.

As $y_L$ increases, the region $C_2$ remains constant, so the leftmost point of $C_2$ remains constant. This point can be computed in $O(n)$ time.

Putting this all together, each of the leftmost points of $B$, $D$ and $C_2$ are piecewise linear functions in terms of $y_L$ and can be computed in $O(n \log n)$ time. Hence, their minimum $x_B$ is a piecewise linear function of complexity $O(n\alpha(n))$ and can be computed in $O(n \log n)$ time.
\end{proof}

Then we compute $x_{R_1}$, $x_{R_2}$, $y_{R_1}$ and $y_{R_2}$ in a similar fashion. 

\begin{lemma}
\label{lem:y_r}
The variables $x_{R_1}$, $x_{R_2}$, $y_{R_1}$, $y_{R_2}$ as functions
of variable $y_L$ are piecewise linear functions of complexity
$O(n\alpha(n))$ and can be computed in $O(n \log n)$ time.
\end{lemma}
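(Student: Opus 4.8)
The plan is to treat $y_L$ as a parameter sweeping over its range of valid values and to track each of the four extreme coordinates as $y_L$ varies. By construction the square $L = [x_{\min},x_{\min}+1]\times[y_L-1,y_L]$ translates linearly in $y_L$, and by Lemmas~\ref{lem:x_t} and~\ref{lem:x_b} the left sides $x_T$ and $x_B$ of $T$ and $B$ are piecewise-linear functions of $y_L$; since $T$ and $B$ are pinned to the top and bottom of the bounding box, the full squares $L$, $T$, and $B$ therefore translate piecewise-linearly in $y_L$. Consequently, for any fixed segment $s$, the intersection of each of $L$, $T$, $B$ with the supporting line of $s$ is an interval whose endpoints, measured as an $x$- or $y$-coordinate along $s$, are piecewise-linear in $y_L$.

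First I would analyse a single segment $s$. Its uncovered portion is $s$ minus the union of the (at most three) covering intervals, so the leftmost, rightmost, topmost, and bottommost uncovered points of $s$ are each obtained by a constant number of $\min$/$\max$ operations applied to the endpoints of these intervals and to the endpoints of $s$ itself. Within any maximal range of $y_L$ on which $x_T$ and $x_B$ are both linear and on which the combinatorial pattern of the three intervals (which are nonempty, how they overlap, and which one abuts an endpoint of $s$) is fixed, each of these four quantities is a single linear function of $y_L$. Hence per segment each quantity is piecewise linear, and I would enumerate the $O(1)$ combinatorial patterns to obtain its breakpoints explicitly.

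Next I would aggregate over all segments. Writing $x_{R_2}(y_L) = \max_s r_s(y_L)$, where $r_s$ is the rightmost uncovered $x$-coordinate on $s$, expresses $x_{R_2}$ as the upper envelope of $n$ piecewise-linear functions; the analogous upper and lower envelopes yield $x_{R_1}$, $y_{R_1}$, and $y_{R_2}$. An envelope of piecewise-linear functions is again piecewise linear, which settles the structural part of the claim. To compute it, I would collect the global breakpoints inherited from $x_T$ and $x_B$ together with the $O(1)$ intrinsic breakpoints of each segment, and build each envelope by divide-and-conquer (or a sweep) over the resulting line pieces.

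The hard part will be the complexity and the $O(n\log n)$ time bound rather than the structural claim, and two issues must be controlled. First, a single breakpoint of $x_T$ or $x_B$ can simultaneously change the covering interval on many segments, so I must argue that the total number of linear pieces, summed over all $n$ per-segment functions, is $O(n)$ — for instance by charging each segment's pieces to the $O(1)$ regimes in which its relevant end is actually reached by $T$ or $B$, exploiting the fixed left-to-right order $L,T,B,R$ and the monotone way the three squares advance as $y_L$ increases. Second, the envelope of these pieces must be shown to have near-linear complexity and to be computable in $O(n\log n)$ time; since all pieces are line segments, any two of them cross at most once, so the envelope has complexity $O(n\alpha(n))$ and can be built within the stated bound by a standard merge, and if the monotonicity of the sweep can be pushed through one even avoids the inverse-Ackermann factor.
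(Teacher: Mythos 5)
Your structural claim (piecewise linearity) is fine, but the heart of this lemma is the $O(n \log n)$ bound, and there your proposal has a genuine gap that you anticipate but do not close. In your per-segment decomposition, the function $r_s$ for a segment $s$ is not built from $O(1)$ intrinsic breakpoints only: whenever, say, the leftmost uncovered point of $s$ lies on the right edge of $T$, the function $r_s$ coincides with $x_T(y_L)+1$ (mapped onto $s$) and therefore inherits \emph{every} breakpoint of $x_T$ over that range of $y_L$. Since $x_T$ can have $\Theta(n\alpha(n))$ breakpoints and $\Theta(n)$ segments can simultaneously have their left portions covered by $T$, the total number of pieces summed over your $n$ per-segment functions can be of order $n^2$, not $O(n)$. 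Your proposed charging --- to the ``$O(1)$ regimes in which the segment's relevant end is reached by $T$ or $B$'' --- bounds the number of \emph{regimes} per segment, but within a single regime $r_s$ can contain arbitrarily many inherited pieces, so it does not bound the number of \emph{pieces}. With total piece count $\omega(n)$, both the envelope complexity and the envelope construction time exceed the claimed $O(n\log n)$.

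The paper circumvents this by never working per segment. It partitions the uncovered region (relative to the already placed $L$, $T$, $B$) into $O(1)$ subregions ($D$, $E$, and $C_3$), and observes that the extreme point of each subregion is a \emph{composition} of two monotone piecewise-linear skyline functions: for instance, the topmost point of $D$ is a skyline of the segments evaluated at the position $x_T$, composed with $x_T$ as a function of $y_L$. A composition of two monotone piecewise-linear functions, each of complexity $O(n\alpha(n))$, again has complexity $O(n\alpha(n))$ and is computable by a single simultaneous sweep; taking the min/max of $O(1)$ such functions finishes the proof. This grouping by region rather than by segment is exactly what collapses all of your inherited pieces --- which are copies of one and the same global function $x_T + 1$ --- into a single function, and it is the ingredient missing from your argument. (If you had represented each per-segment function symbolically, noting that every inherited portion equals the same global function, you would in effect have re-derived the paper's region decomposition.)
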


\begin{figure}[tb]
  \centering
  \includegraphics{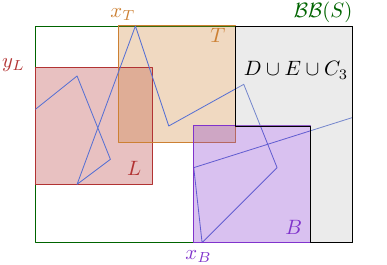}
  \caption{The regions $D \cup E \cup C_3$ as defined by $L$, $T$ and $B$.}
  \label{fig:four_hotspot_d_e_c3}
\end{figure}

\begin{proof}
We will show that $y_{R_1}$ as a function of $y_L$ is a piecewise linear function and can be computed in $O(n \log n)$ time. The other variables follow analogously.

Divide the region to the right of $L$, $T$ and $B$ into three subregions: $D$ is to the right of $L$ and above $B$, $E$ to the right of $B$, and $C_3$ below $D$ and above $B$, if it exists. Note that $C_3$ only exists in squares where the height of the bounding box is greater than two. Otherwise, only the regions $D$ and $E$ exist, as shown in Figure~\ref{fig:four_hotspot_d_e_c3}. We compute the topmost points of $D$, $E$ and $C_3$ separately, then return their overall topmost point to be the value of $y_{R_1}$. 

As $y_L$ increases, the position of square $T$ given by the variable $x_T$ moves monotonically to the right and follows the skyline of the segments. Similarly, as $x_T$ moves monotonically to the right, the topmost point of $D$ moves monotonically to the right. We have shown that each of these monotonic functions are piecewise linear, have complexity $O(n \alpha(n))$ and can be computed in $O(n \log n)$ time (See Lemma~\ref{lem:skyline}). Computing their composition of monotonic, piecewise linear functions requires a single simultaneous sweep of the two functions, which takes $O(n \alpha(n))$. Hence, the overall function is piecewise linear and can be computed in $O(n \log n)$.

As $y_L$ increases, the topmost point of $E$ is similarly a composition of two skyline functions, which is piecewise linear and can be computed in $O(n \log n)$.

Finally, as $y_L$ increases, the topmost point of $C_3$ (if it exists) is constant and can be computed in $O(n)$ time.

Putting this all together, each of the leftmost points of $D$, $E$ and $C_3$ are piecewise linear functions in terms of $y_L$ and can be computed in $O(n \log n)$ time. Hence, their minimum $y_{R_1}$ is a piecewise linear function of complexity $O(n\alpha(n))$ and can be computed in $O(n \log n)$ time.
\end{proof}

Finally, we check if there exists a value of $y_L$ so that $x_{R_1}-x_{R_2} \leq 1$ and $y_{R_1}-y_{R_2} \leq 1$. If so, there exist positions for $L$, $B$, $T$ and $R$ that cover all the segments, otherwise, there is no such position. This yields the following result:

\begin{theorem}
\label{thm:4-covering}
One can decide if a set of $n$ segments is 4-coverable in $O(n \log n)$ time.
\end{theorem}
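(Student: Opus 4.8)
The plan is to prove Theorem~\ref{thm:4-covering} by combining the case split of Lemma~\ref{lem:4-coverable} with the functional machinery set up by Lemmas~\ref{lem:x_t}, \ref{lem:x_b}, and \ref{lem:y_r}. The first case of Lemma~\ref{lem:4-coverable}---a square in a corner---reduces immediately to the 3-covering problem on the uncovered subsegments, exactly as in the proof of Theorem~\ref{thm:3-covering}, and so is handled in $O(n)$ time (with a constant-factor blowup from the four corner choices). The real work is the second case, where each of the four squares $L$, $T$, $B$, $R$ touches exactly one side of the bounding box. I would also note at the outset that the assumption ``$T$ is to the left of $B$'' is without loss of generality by a symmetric argument (swapping the roles of top and bottom, or reflecting), so the whole second case is covered by at most a constant number of symmetric sub-runs of the same procedure.

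The heart of the argument is to parametrise the second case by $y_L$, the $y$-coordinate of the top side of $L$. First I would invoke Lemma~\ref{lem:placing_T} (and its analogue for $B$) to argue that, once $y_L$ is fixed, the greedy placement of $T$ in the top-left corner, then $B$ in the bottom-left corner of the successively-remaining bounding boxes, is optimal: if any valid placement of the three remaining squares exists, this canonical one does too. This collapses the four-dimensional search to a one-dimensional search over $y_L$. Then I would assemble the already-established piecewise-linear functions: by Lemmas~\ref{lem:x_t} and~\ref{lem:x_b}, $x_T(y_L)$ and $x_B(y_L)$ are piecewise linear and computable in $O(n\log n)$ time, and by Lemma~\ref{lem:y_r} the four extremal coordinates $x_{R_1}, x_{R_2}, y_{R_1}, y_{R_2}$ determining the feasibility of $R$ are likewise piecewise-linear functions of $y_L$, each computable in $O(n\log n)$ time and of complexity $O(n)$.

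With these functions in hand, the covering exists in this case exactly when there is some $y_L$ satisfying the two inequalities $x_{R_1}(y_L) - x_{R_2}(y_L) \le 1$ and $y_{R_1}(y_L) - y_{R_2}(y_L) \le 1$ simultaneously, as stated in the paragraph preceding Lemma~\ref{lem:x_t}. Since both $x_{R_1}-x_{R_2}$ and $y_{R_1}-y_{R_2}$ are differences of piecewise-linear functions, each is itself a piecewise-linear function of $y_L$ with $O(n)$ breakpoints. I would therefore merge the breakpoints of the two functions into a single sorted list (an $O(n\log n)$ sort, or $O(n)$ merge if the breakpoints are already sorted), and on each of the $O(n)$ resulting intervals both constraint functions are linear; testing whether the intersection of the two feasible sub-intervals is nonempty on each piece is then a constant-time comparison. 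Scanning all pieces takes $O(n)$ time. Any feasible $y_L$ yields concrete positions for $L$, $T$, $B$, $R$; if no piece is feasible across all symmetric sub-cases and the corner case also fails, the set is not 4-coverable.

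The main obstacle I anticipate is not the final interval-intersection step, which is routine, but correctly justifying the reduction to the canonical greedy placement and the one-parameter sweep---specifically, arguing that fixing $y_L$ and greedily placing $T$, $B$, then $R$ loses no generality. This rests on Lemma~\ref{lem:placing_T} and its analogue, so the burden is to confirm that those lemmas compose correctly: placing $T$ optimally given $L$, and then $B$ optimally given $L$ and $T$, does not preclude a feasible $R$ that some non-greedy placement would have allowed. Once monotonicity and the exchange argument underlying Lemma~\ref{lem:placing_T} are in place, the remaining steps are bookkeeping, and the dominant cost $O(n\log n)$ comes from computing the piecewise-linear functions rather than from combining them.
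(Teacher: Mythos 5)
Your proposal is correct and follows essentially the same route as the paper: split via Lemma~\ref{lem:4-coverable}, handle the corner case by recursion to the 3-covering algorithm, and handle the one-square-per-side case by fixing the order $L,T,B,R$, invoking Lemma~\ref{lem:placing_T} (and its analogue for $B$) to reduce to a one-parameter sweep over $y_L$, then using Lemmas~\ref{lem:x_t}, \ref{lem:x_b}, and~\ref{lem:y_r} and checking feasibility of $x_{R_1}-x_{R_2}\leq 1$ and $y_{R_1}-y_{R_2}\leq 1$ piece by piece. Your added detail on merging breakpoints and scanning the linear pieces is a correct elaboration of the paper's final feasibility check, and your identified ``main obstacle'' is exactly what Lemma~\ref{lem:placing_T} and its analogue are there to discharge.
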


\section{Problem 2: The Subtrajectory Data Structure Problem}
\label{sec:problem2}

Let $\mathcal T$ be a piecewise linear trajectory of complexity $n$. We briefly describe some tools, then we use our tools to construct data structures for answering if a subtrajectory is either 2-coverable or 3-coverable. 

\begin{tool}
  \label{tool:bb}
  A linear size ``bounding box'' data structure that can be built in
  $O(n)$ time and given a pair of query points $s \prec t$ on \T can
  return the bounding box of $\T[s,t]$ in $O(\log n)$ time.
\end{tool}

\begin{figure}[tb]
    \centering
    \includegraphics{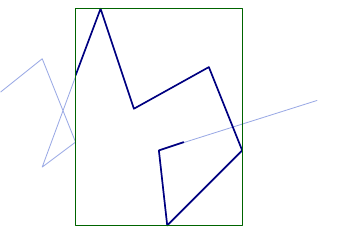}
    \caption{The subtrajectory data structure that returns the bounding box.}
    \label{fig:bb}
\end{figure}

\begin{proof}
Construct a binary tree on the sequence of edges in $\mathcal T$. For each node in the binary tree, the segments that are its descendants form a contiguous subtrajectory of $\mathcal T$. We call such a contiguous subtrajectory a canonical subset. For each internal node, we compute the bounding box of its canonical subset. If we compute the bounding boxes in a bottom up fashion, each internal node can be processed in constant time. The overall construction time is $O(n)$.

Given a query subtrajectory, we decompose the subtrajectory into canonical subsets. We query the data structure to obtain an individual bounding box for each canonical subset. We compute a combined bounding box that contains all individual bounding boxes. The overall query time is $O(\log n)$.
\end{proof}

\begin{tool}
\label{tool:ue}
An $O(n\alpha(n)\log n)$ size ``upper envelope'' data structure that
can be built in $O(n\alpha(n)\log n)$ time and given a pair of query
points $s \prec t$ on \T and a vertical line can return the highest
intersection between the line and subtrajectory $\T[s,t]$ (if one
exists) in $O(\log n)$ time. See Figure~\ref{fig:tool:ue}.
\end{tool}

\begin{proof}

We use an approach similar to Tool~\ref{tool:bb}. We build a binary search tree over segments of $\mathcal{T}$, and associate with each internal node the subset of all its descendants. We call this the canonical subset associated with this internal node. By construction, every subtrajectory can be decomposed into a union of $O(\log n)$ canonical subsets.

At each internal node, we store the upper envelope of all segments in
its canonical subset. This upper envelope can be represented by an
list of its vertices in left-to-right order. Since the upper envelope
of $n$ line segments has size $O(n\alpha(n))$, the total size of our
data structure is $O(n\alpha(n)\log n)$. Computing all these upper
envelopes can be done in $O(n\alpha(n)\log n)$ time using a bottom up
divide and conquer approach~\cite{DBLP:journals/ipl/Hershberger89}.

Given a query subtrajectory and a vertical line at $x$-coordinate $x$,
we can naively answer the upper envelope query in two steps. First, we
decompose the subtrajectory into $O(\log n)$ canonical subsets, and
find the segment realizing the upper envelope at $x$ using a binary
search in $O(\log(n\alpha(n)))=O(\log n)$ time. Second, we report the
highest intersection point of the vertical line with the $O(\log n)$
segments found. The overall query time for this approach would be
$O(\log^2 n)$ time. A standard application of fractional
cascading~\cite{DBLP:journals/algorithmica/ChazelleG86} reduces the
total query time to $O(\log n)$.

\end{proof}

Using Tool~\ref{tool:ue} we can report whether a vertical half-line intersects a subtrajectory. We build Tool~\ref{tool:ue} in all four cardinal directions. Similarly, we will build Tool~\ref{tool:pl_slab} and Tool~\ref{tool:pl_quadrant} in all four cardinal directions.

\begin{figure}[tb]
    \centering
    \begin{minipage}{0.49\textwidth}
        \centering
        \includegraphics{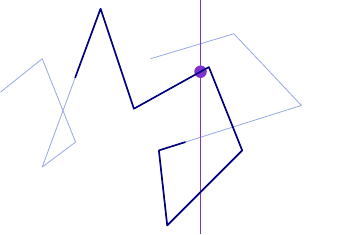}
        \caption{Tool~\ref{tool:ue} returns the highest intersection
          of a subtrajectory and a vertical~line.}
        \label{fig:tool:ue}
    \end{minipage}\hfill
    \begin{minipage}{0.49\textwidth}
        \centering
        \includegraphics{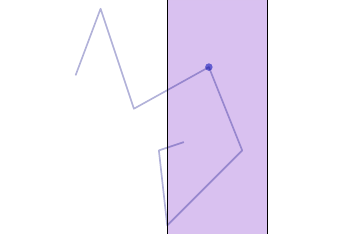}
        \caption{Tool~\ref{tool:pl_slab} returns the highest subtrajectory
          vertex in a query slab.
          }
        \label{fig:tool:pl}
    \end{minipage}
\end{figure}

\begin{tool}
  \label{tool:pl_slab}
  An $O(n\log n)$ size ``highest vertex'' data structure that can be
  built in $O(n\log n)$ time and given a pair of query points
  $s \prec t$ on \T and a vertical query slab can return the highest vertex
  of \T in subtrajectory $\T[s,t]$ in the slab in $O(\log n)$
  time. See Figure~\ref{fig:tool:pl}.
\end{tool}

\begin{proof}
  We store \T in the leaves of a balanced binary search tree in order
  along the trajectory. For each internal node $\nu$, corresponding to
  a canonical subset of vertices $\T_\nu$, we store this set of points
  ordered on increasing $x$-coordinate. In addition, we store their
  $y$-coordinates in an array $A_\nu$ in this order. That is,
  $A_\nu[i]$ stores the $y$-coordinate of the point $p_i$ with the
  $i^\mathrm{th}$-largest $x$-coordinate among $\T_\nu$. Each entry in
  the sequence $\T_\nu$ will store its index in $A_\nu$, moreover we
  assume that for each index $i$ we can again retrieve the point $p_i$
  (e.g. by storing another array that stores the original points). We
  preprocess $A_\nu$ for constant time range maximum
  queries~\cite{bender00lca_probl_revis}, and we build a fractional
  cascading structure on canonical subsets
  $\T_\nu$~\cite{DBLP:journals/algorithmica/ChazelleG86}. Since each
  node uses linear space, the total space required is $O(n\log n)$.
  Moreover, by building the sorted lists $\T_\nu$ in a bottom-up
  fashion we can build the entire data structure in $O(n\log n)$ time
  as well.

  To answer a query, we find the $O(\log n)$ nodes whose canonical
  subsets make up the query subtrajectory. For each such node $\nu$,
  the points from $\T_\nu$ that lie in the query slab are stored
  consecutively in $\T_\nu$ (and $A_\nu$). So, using the fractional
  cascading structure we can find, the index $i_\nu$ of the leftmost
  point among $\T_\nu$ that lies in the vertical query slab. This
  takes $O(\log n)$ time in total. Similarly, we get the index $j_\nu$
  of the rightmost point from $\T_\nu$ in the query slab. We can then
  query the range maximum structure on the array $A_\nu$ with the
  range $i_\nu,..j_\nu$ to find the point with maximum $y$-coordinate
  in constant time. We do this for all $O(\log n)$ nodes and report
  the highest point found. It follows that this is the highest vertex
  on the query subtrajectory that also lies in the query slab.
\end{proof}

\begin{tool}
  \label{tool:pl_quadrant}
  An $O(n\log n)$ size ``highest vertex'' data structure that can be
  built in $O(n\log n)$ time and given a pair of query points
  $s \prec t$ on \T and a query quadrant
  $Q=[Q_x,\infty) \times [Q_y,\infty)$ can return the highest vertex
  of \T in $\T[s,t] \cap Q$ in $O(\log n)$ time.
\end{tool}
\begin{proof}
  We again store the vertices of \T in the leaves of a balanced binary
  search tree in order along the trajectory. For each internal node
  $\nu$, consider the function $f_\nu(x)$ expressing the maximum
  $y$-coordinate among the points in $T_\nu$ right of the vertical
  line at $x$. Observe that this function is piecewise constant,
  monotonically decreasing, and has complexity $O(n)$. We store the
  graph of this function $f_\nu$ by storing an ordered sequence of its
  breakpoints, which allows us to evaluate $f_\nu(x)$ for some value
  $x$ in $O(\log n)$ time by a binary search. The data structure uses
  $O(n \log n)$ space, and can be built in $O(n\log n)$ time
  (e.g. again by sorting the points on increasing $x$-coordinate in a
  bottom up fashion).

  To answer a query, we find the $O(\log n)$ nodes whose canonical
  subsets make up the query subtrajectory. For each such node $\nu$,
  we evaluate $f_\nu(Q_x)$, and compute the maximum over all nodes. If
  this value is at least $Q_y$ the corresponding point is the highest
  vertex of the query subtrajectory in quadrant $Q$, and hence we can
  report it. If the value is smaller than $Q_y$ the quadrant is
  empty. The query time is $O(\log^2 n)$ time, which we can reduce to
  $O(\log n)$ using fractional
  cascading~\cite{DBLP:journals/algorithmica/ChazelleG86}.
\end{proof}

Using a data structure analogous to Tool~\ref{tool:pl_quadrant} we can
also report the lowest point on the query subtrajectory in a quadrant
$[Q_x,\infty) \times [Q_y,\infty)$.

\subsection{Query if a subtrajectory is 2-coverable}
\label{subsec:problem2_k=2}

We start with a lemma to help us apply Tool~\ref{tool:ue} to the boundary of the union of two axis aligned unit squares.

\newcommand{\U}{\ensuremath{\mathcal{U}}\xspace}
\begin{lemma}
  \label{lem:union_2squares}
  The union $\U = \H_1 \cup \H_2$ be of two axis aligned unit squares
  has constant complexity, every edge $\overline{uv}$ of \U is axis
  aligned, and at least one of the half-lines $\overrightarrow{uv}$
  and $\overrightarrow{vu}$ does not intersect the interior of \U. See Figure~\ref{fig:lemma9_figure}, (left).
\end{lemma}

\begin{figure}[tb]
    \centering
    \includegraphics{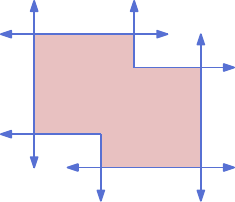}
    \includegraphics{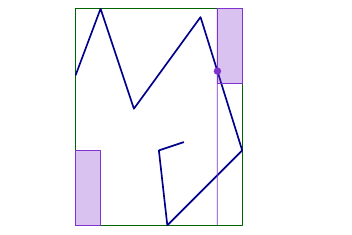}
    \caption{The union of two squares and the half-lines that do not intersect the interior (left). One of the four half-lines on the boundary of the union of two squares (right).}
    \label{fig:lemma9_figure}
\end{figure}

\begin{proof}
  The only non-trivial part of the Lemma is to argue that for every
  edge $\overline{uv}$ either $\overrightarrow{uv}$ or
  $\overrightarrow{vu}$ does not intersect the interior of \U. Assume
  w.l.o.g. that $\overline{uv}$ is part of the boundary of $\H_1$ (and
  thus lies outside of $\H_2$), and assume that $\overrightarrow{uv}$
  intersects the interior of $\U$ (otherwise the claim already
  holds). It follows that $\overrightarrow{uv}$ intersects the
  interior of $\H_2$. By convexity of $\H_2$ it then follows
  $\overrightarrow{vu}$ does not intersect the interior of $\H_2$,
  otherwise $\overline{uv}$ would be contained in $\H_2$. This
  completes the proof.
\end{proof}

Now we are ready to prove the main result of Section~\ref{subsec:problem2_k=2}.

\begin{theorem}
  \label{thm:problem2_k=2}
  Let $\mathcal{T}$ be a trajectory with $n$ vertices. After $O(n\alpha(n)\log n)$
  preprocessing time, $\mathcal{T}$ can be stored using $O(n\alpha(n)\log n)$ space, so
  that deciding if a query subtrajectory $\mathcal{T}[a,b]$ is 2-coverable
  takes $O(\log n)$ time.
\end{theorem}

\begin{proof}
  Our construction procedure is to build Tool~\ref{tool:bb}, and Tool~\ref{tool:ue} for all four cardinal directions. Our query procedure consists of three steps. First, we use Tool~\ref{tool:bb} to compute the bounding box of the subtrajectory. Second, we apply Lemma~\ref{lem:2-covering_corner} to obtain a configuration of two squares. Finally, we check if the subtrajectory is inside the union of the two squares. We do so by using Tool~\ref{tool:ue} to check if the subtrajectory passes through any of the four half-lines on the boundary of the union, see Figure~\ref{fig:lemma9_figure}, (right). The construction procedure takes $O(n\alpha(n)\log n)$ time and space. The query procedure takes $O(\log n)$ time.
\end{proof}

\subsection{Query if a subtrajectory is 3-coverable}
\label{subsec:problem2_k=3}

We start with a lemma analogous to Lemma~\ref{lem:union_2squares}, but for three squares.

\begin{lemma}
  \label{lem:union_3squares}
  The union $\U = \H_1 \cup \H_2 \cup \H_3$ be of three axis aligned
  unit squares has constant complexity, every edge $\overline{uv}$ of
  \U is axis aligned, and there is at most one edge $\overline{uv}$
  for which both the half-lines $\overrightarrow{uv}$ and
  $\overrightarrow{vu}$ both intersect the interior of \U. See Figure~\ref{fig:extend_indefinite}.
\end{lemma}

\begin{figure}[tb]
  \centering
  \includegraphics{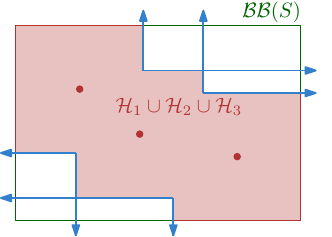}
  \includegraphics{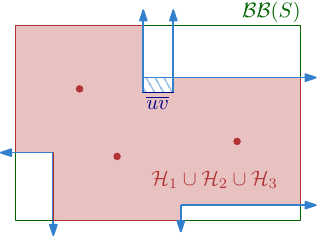}
  \caption{The boundary pieces of the union and the cardinal
    directions we can extend the boundary piece
    indefinitely.
  }
  \label{fig:extend_indefinite}
\end{figure}

\begin{proof}
  Assume, by contradiction, that there are two edges
  $\overline{u_1v_1}$ and $\overline{u_2v_2}$ for which both rays
  intersect the interior of \U.

  Assume without loss of generality that $\overline{u_2v_2}$ lies on
  the top side of $\H_2$ (otherwise rotate the plane), and that $u_2$
  lies left of $v_2$. Since the squares are all convex, it now follows
  that: (i) $\overrightarrow{u_2v_2}$ and $\overrightarrow{v_2u_2}$ do
  not intersect the interior of $\H_2$, and (ii) that they cannot both
  intersect the same $\H_i$. So, assume without loss of generality
  that $\overrightarrow{u_2v_2}$ intersects $\H_3$ and that
  $\overrightarrow{v_2u_2}$ intersects $\H_1$. It then follows that
  the left to right ordering of (the centers of) the squares is
  $\H_1,\H_2,\H_3$. Furthermore, (the center of) $\H_2$ is the lowest
  center among the three squares.

  We now argue that $\overline{u_1v_1}$ cannot also be horizontal. Via
  the same reasoning as above, $\overrightarrow{u_1v_1}$ and
  $\overrightarrow{v_1u_1}$ must hit different squares, and thus
  $\overline{u_1v_1}$ must lie on the middle square $\H_2$. In
  particular, since $\H_2$ is the lowest square and the squares have
  the same size, $\overline{v_1u_1}$, it must lie on the top side of
  $\H_2$ as well (the horizontal line through the bottom side does not
  intersect $\H_1$ or $\H_3$). That implies that two oppositely
  oriented rays (e.g. $\overrightarrow{u_2v_2}$ and
  $\overrightarrow{v_1u_1}$) intersect the same square
  (e.g. $\H_3$). However, since all squares have the same size we
  again get a contradiction.

  So $\overline{u_1v_1}$ is vertical, with say $u_1$ below $v_1$. We
  once again have that $\overrightarrow{u_1v_1}$ and
  $\overrightarrow{v_1u_1}$ must intersect different
  squares. Moreover, the centers of these squares must lie on the same
  of the vertical line through $\overline{u_1v_1}$. Consider the case
  that these centers lie right of this line. The other case is
  symmetric. It now follows that $\overline{u_1v_1}$ must lie on the
  leftmost square $\H_1$, and that the square below
  $\overline{u_1v_1}$ must be $\H_2$. More specifically, $u_1$ must
  lie on or above the top side of $\H_2$ (again since the squares have
  the same size). Similarly, $v_1$ must lie below the bottom side of
  $\H_3$. However, this implies that the bottom side of $\H_3$ lies
  above the top side of $\H_2$, and thus the horizontal ray
  $\overrightarrow{u_2v_2}$ does not intersect the interior of
  $\H_3$. Contradiction.
\end{proof}

Now we can prove the main result of Section~\ref{subsec:problem2_k=3}.

\begin{theorem}
  \label{thm:problem2_k=3}
  Let $\mathcal{T}$ be a trajectory with $n$ vertices. After
  $O(n\alpha(n)\log n)$ preprocessing time, $\mathcal{T}$ can be stored using
  $O(n\alpha(n)\log n)$ space, so that deciding if a query subtrajectory
  $\T[a,b]$ is 3-coverable takes $O(\log n)$ time.
\end{theorem}

  \begin{figure}[tb]
    \centering
    \includegraphics{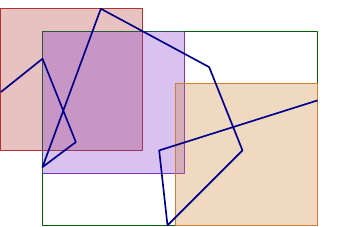}
    \includegraphics{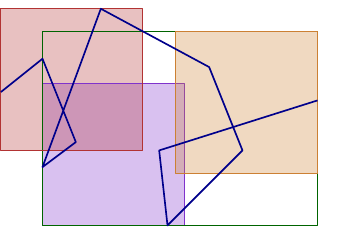}
    \caption{The second and third squares in one of two configurations.}
    \label{fig:23_pos00}
  \end{figure}

\begin{proof}
  Our construction procedure is to build
  Tool~\ref{tool:bb}, and Tools~\ref{tool:ue},~\ref{tool:pl_slab},
  and~\ref{tool:pl_quadrant} in all four cardinal directions. The total  preprocessing time and space usage is $O(n\alpha(n)\log n)$.

  Our query procedure consists of three
  steps. The first step is to place the first square in the corner of the bounding box. The second step is to place the
  second and third squares in one of two configurations, see Figure~\ref{fig:23_pos00}. The
  third step is to check if the configuration of three squares covers
  the subtrajectory.

  In the first step, we use Tool~\ref{tool:bb} to compute the bounding
  box of the query subtrajectory, and then apply
  Lemma~\ref{lem:3-covering_corner} to place the first square in
  a corner of the bounding box.

  In the second step, we compute the bounding box of the uncovered
  segments after placing the first box. Then we apply
  Lemma~\ref{lem:2-covering_corner} to place the final two squares in
  opposite corners of the bounding box of the uncovered
  subsegments. See Figure~\ref{fig:23_pos00}. Suppose we placed a square in the top-left corner in the first
  step. We have two cases for the topmost uncovered point: it is either
  a vertex of the subtrajectory, or the intersection of a subtrajectory
  edge with the left or bottom side of the top-left square. The first
  case can be handled by two queries to Tool~\ref{tool:pl_quadrant}. The second case can be handled by querying 
  Tool~\ref{tool:ue} along the right or bottom boundaries of the
  top-left square, and taking the highest of these points. We apply the same procedure in all four cardinal directions to obtain the bounding box of the uncovered subsegments, as required. 

  For the third step, we check if a given configuration
  $\U=\H_1 \cup \H_2 \cup \H_3$ of three squares covers the subtrajectory
  $\T[a,b]$. The approach is similar to the two square case: we check if
  the starting point $a$ lies inside \U, and if the subtrajectory ever
  exits \U. We use a combination of Tools~\ref{tool:ue},~\ref{tool:pl_slab} and~\ref{tool:pl_quadrant}, to~achieve~this.
  
  By Lemma~\ref{lem:union_3squares} there is at most one edge
  $\overline{uv}$ on the boundary of \U for which both rays
  $\overrightarrow{uv}$ and $\overrightarrow{vu}$ hit the (interior of)
  \U. Hence, for all edges other than $\overline{uv}$ we can check if the
  subtrajectory exits \U using an appropriate copy of
  Tool~\ref{tool:ue}. Observe that if the
  subtrajectory does not intersect the boundary of \U in any edge other than $\overline{uv}$
  then either it exits \U through $\overline{uv}$ and does not return,
  or it exits and reenters \U through $\overline{uv}$. In both cases a
  vertex of $\T[a,b]$ (possibly its endpoint) must lie in the connected
  component of $\BB(\T[a,b])\setminus \U$ that is incident to
  $\overline{uv}$. We can check this using Tools~\ref{tool:pl_slab} and~\ref{tool:pl_quadrant}. 
  
  All in all we make a constant number of queries to
  Tools~\ref{tool:bb},~\ref{tool:ue},~\ref{tool:pl_slab},
  and~\ref{tool:pl_quadrant}, each of which takes $O(\log n)$
  time. Hence, we can test if a query subtrajectory is 3-coverable in
  $O(\log n)$ time.
\end{proof}

\section{Problem 3 for \texorpdfstring{$k=1$}{k=1}: A Longest 1-Coverable Subtrajectory}
\label{sec:A_Longest_1-coverable_subtrajectory}

In this section we compute a longest $k$-coverable subtrajectory
$\T[p^*,q^*]$ of a given trajectory $\T$ for $k=1$. Note that the start and end
points $p^*$ and $q^*$ of such a subtrajectory need not be vertices of
the original trajectory. Gudmundsson, van Kreveld, and
Staals~\cite{hotspots2013} presented an $O(n \log n)$ time algorithm
for the case $k=1$. However, we note that there is a mistake in
one of their proofs, and hence their algorithm misses one of the
possible scenarios. We show how this case can also be handled in
$O(n\log n)$ time, thus correcting their mistake.

\begin{wrapfigure}[13]{r}{0.35\textwidth}
  \centering
  \includegraphics{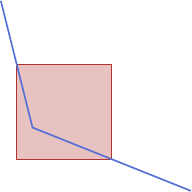}
  \caption{An optimal placement that has no vertex on the boundary of
    the square.}
  \label{fig:counterexample}
\end{wrapfigure}

Gudmundsson, van Kreveld, and Staals state that there exists an
optimal placement of a unit square, i.e. one such that the square
covers a longest $1$-coverable subtrajectory of $\T$, and has a vertex
of $\T$ on its boundary~\cite[Lemma 7]{hotspots2013}. However, that is
incorrect, as illustrated in Figure~\ref{fig:counterexample}. Let $p(t)$ be a parametrisation of the trajectory. Fix a corner $c$ of the square and shift the square so that $c$ follows $p(t)$. Let $q(t)$ be the point so that $\mathcal T[p(t),q(t)]$ is the maximal subtrajectory contained in the square, and let $\phi(t)$ be the length of this subtrajectory. This function $\phi$ is piecewise linear, with inflection points not only when a vertex of $\T$ lies on the boundary of the square, but also when $p(t)$ or $q(t)$ hits a corner of the square. The argument in
\cite{hotspots2013} misses this last case. Instead, the correct
characterization is:

\begin{lemma}
  \label{lem:characterization_k=1}
  Given a trajectory $\T$ with vertices $v_1,..,v_n$, there exists a
  square $\H$ covering a longest 1-coverable subtrajectory so that
  either:
\begin{itemize}[noitemsep]
  \item there is a vertex $v_i$ of $\T$ on the boundary of $\H$, or
  \item there are two trajectory edges passing through opposite
    corners of $\H$.
\end{itemize}
\end{lemma}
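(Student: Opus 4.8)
The plan is to analyze the function $\phi(t)$ introduced in the lemma's setup and show that an optimal placement occurs at one of its inflection points, which correspond precisely to the two cases claimed. I would fix a corner $c$ of the square---say the bottom-left corner---and slide the square so that $c$ traces out the trajectory $p(t)$. As $c$ moves along $p(t)$, the maximal subtrajectory $\mathcal{T}[p(t),q(t)]$ contained in the square has length $\phi(t)$, and the length of a longest $1$-coverable subtrajectory is at least $\sup_t \phi(t)$. The first step is to argue that we lose no generality by restricting attention to placements where $c$ lies on $\mathcal{T}$: since a longest covered subtrajectory has its starting point $p^*$ on the boundary of the covering square (otherwise we could enlarge the subtrajectory), and in particular the covering square can be translated until its bottom-left corner coincides with a point of $\mathcal{T}$, this parametrisation captures an optimal placement.

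The second step is to establish that $\phi$ is piecewise linear and to identify its breakpoints. As $c$ follows an edge of $\mathcal{T}$, the square translates along a fixed direction; both the entry point $p(t)$ and the exit point $q(t)$ vary continuously, and $\phi$ changes linearly in $t$ \emph{except} at values of $t$ where the combinatorial structure of the intersection of the square with $\mathcal{T}$ changes. I would enumerate exactly when such combinatorial changes happen: (a) when a vertex $v_i$ of $\mathcal{T}$ crosses the boundary of the square (causing the edge incident to $v_i$ to enter or leave), and (b) when the entry point $p(t)$ or the exit point $q(t)$ passes through a corner of the square, at which moment the side of the square that $p$ or $q$ lies on switches. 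Between consecutive breakpoints of these two types, $\phi$ is linear in $t$, so the maximum of $\phi$ over each linear piece is attained at an endpoint, i.e. at a breakpoint of type (a) or (b).

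The third step is to translate these breakpoints into the geometric configurations stated in the lemma. A type-(a) breakpoint directly gives a vertex $v_i$ of $\mathcal{T}$ on the boundary of $H$, which is the first case. For a type-(b) breakpoint, the subtlety is that it is not $c$ but rather the entry point $p(t)$ or exit point $q(t)$ that lies at a corner of the square; I would argue that at the optimum both endpoints are pinned at corners, yielding two trajectory edges passing through \emph{opposite} corners of $H$, which is the second case. Concretely, if only one endpoint were at a corner while the other endpoint sat in the relative interior of a side with no vertex on the boundary, then along the linear piece $\phi$ has nonzero slope in one direction, so we could improve $\phi$ by sliding further---contradicting optimality unless we hit the next breakpoint, which must then pin the second endpoint. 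The edge through $p(t)$ enters at one corner and the edge through $q(t)$ exits, and a short case analysis of the slopes shows these corners must be opposite (the entry corner being, say, bottom-left and the exit corner top-right, as in Figure~\ref{fig:counterexample}).

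The main obstacle I anticipate is the third step: carefully ruling out the mixed case in which one endpoint is at a corner and the other is on an interior point of a side, and in particular showing that optimality forces \emph{both} endpoints to corners which are \emph{opposite} rather than adjacent. This requires a genuine slope argument---comparing how $\phi$ changes as we perturb $t$ in each direction and using the fact that at a local maximum the left slope is nonnegative and the right slope is nonpositive---together with ruling out degenerate configurations (e.g. where sliding in some direction keeps $\phi$ constant, which would let us reduce to a case already covered). The piecewise-linearity of $\phi$ and the enumeration of breakpoints in the second step are comparatively routine, but the careful bookkeeping of which side or corner each of $p(t)$ and $q(t)$ occupies is where the argument in \cite{hotspots2013} went wrong, so that is where I would concentrate the rigour.
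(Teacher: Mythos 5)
There is a genuine gap, and it is in your first step. Your parametrisation forces the covered subtrajectory to \emph{start at the corner} $c$: by definition $\phi(t)$ is the length of the maximal subtrajectory $\mathcal{T}[p(t),q(t)]$ that begins at the point $p(t)$ where the corner sits. The justification you give --- that $p^*$ lies on the boundary of an optimal square, and that the square ``can be translated until its bottom-left corner coincides with a point of $\mathcal{T}$'' --- does not deliver what you need. Translating the square changes which subtrajectory it covers, and even if you land a corner on some point of $\mathcal{T}$, that point need not be the starting point of the covered subtrajectory. What your reduction actually requires is that some \emph{globally} optimal placement has its covered subtrajectory starting at a corner of its square, and that is false. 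Concretely, take the trajectory with vertices $(-1,0.5)$, $(0.9,0.5)$, $(0.9,0.9)$, $(0.1,0.9)$, $(0.1,0.1)$, $(0.9,0.1)$, $(0.5,0.45)$. Its unique longest $1$-coverable subtrajectory starts at $(-0.1,0.5)$ and runs to the end of the trajectory, and the unit squares covering it are exactly $[-0.1,0.9]\times[b,b+1]$ for $b\in[-0.1,0.1]$. The starting point $(-0.1,0.5)$ lies in the relative interior of the left side of every such square, never at a corner (corner $y$-coordinates lie in $[-0.1,0.1]\cup[0.9,1.1]$). One of these squares does have a corner on $\mathcal{T}$ (for $b=0.1$ the bottom-right corner is the vertex $(0.9,0.1)$), but the maximal subtrajectory \emph{starting} at that corner is only the short final hook --- which is exactly the conflation in your step~1. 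Hence $\sup_t\phi(t)$ is strictly smaller than the optimal length, the breakpoints of $\phi$ concern only suboptimal placements, and nothing about an optimal square follows. (The lemma itself holds here via its first bullet: the vertices with $x$-coordinate $0.9$ lie on the right side of every covering square; your argument just cannot find this.)

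It is worth noting that this is the same species of error the lemma is correcting in \cite{hotspots2013}: an unjustified ``without loss of generality, the optimum occurs in this special contact configuration.'' The paper's proof makes no such reduction. It starts from an \emph{arbitrary} optimal square, calls a side empty if the trajectory misses it, and perturbs: if some side is empty, optimality forces the opposite side to become empty after an infinitesimal shift, and then a case analysis on the gradients of the edges crossing the two remaining sides lets one slide until a vertex reaches the boundary (or derives a contradiction with optimality); if no side is empty and no vertex lies on the boundary, it argues directly that two edges must pass through opposite corners. If you want to salvage a parametric argument, you would have to either parametrise over the full two-dimensional family of placements covering a fixed optimal subtrajectory, or redefine $\phi(t)$ as the length of the longest subtrajectory contained in the square regardless of where it starts; in either case your breakpoint classification in steps 2--3, which assumes the entry point is pinned at $c$, would have to be re-derived from scratch.
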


\begin{proof}
  Proof by contradiction. Assume that $\H^*$ is an optimal square
  (i.e. covering a longest 1-coverable subtrajectory), and there is no
  optimal square satisfying the conditions in the lemma
  statement. Since $\H^*$ is optimal, the longest contiguous
  subtrajectory $\T^* = \T[p^*,q^*]$ in $\H^*$ must touch two
  opposite sides of $\H^*$. Assume without loss of generality that
  these sides are horizontal.

  Let $\H'=\H^*$ and let $\T'$ be a maximal length sub-trajectory in
  $\H'$, such that initially $\T' = \T^*$. It is easy to see that we
  can shift $\H'$ horizontally---while keeping $\T^*$ inside
  it---until either a vertex of $\T'$ lies on (a vertical side of)
  $\partial \H'$ or $p^*$ lies on a corner of $\H'$. In the former
  case we immediately obtain a contradiction. In the latter case,
  translate $\H'$ while keeping the starting point $p'$ of $\T'$ on
  the same corner of $\H'$ (moving the starting point of $\T'$ earlier
  or later).  Let $\phi(t)$ denote the length of $\T'$ as a function
  of the starting time $t=t_{p'}$ of $\T'$. Function $\phi$ has break
  points when: (i) $p'$ or $q'$ crosses a vertex, (ii) $\H'$ gets a
  vertex of $\T'$ on its boundary, or (iii) when the side of $\H'$
  containing $q$ changes. Since $\phi$ is (piecewise) linear, we can
  either increase or decrease $t$ without decreasing $\phi(t)$ until
  $\phi$ is at a break point. At such a break point $\H'$ has a vertex
  of $\T'$ on its boundary (cases (i) and (ii)) or $q$ lies in a
  corner of $\H'$ (case (iii)). In the former case we arrive at a
  contradiction. In the latter case, observe that $p$ also lies in a
  corner of $\H'$ (by definition of $\phi$). If this corner is
  opposite to that of $q$ we satisfy the second condition of the
  lemma, and thus arrive at a contradiction as well. Otherwise, the
  two corners lie on the same side, say the top side, of $\H'$, and
  thus we can shift $\H'$ upwards while covering $\T'$, until a vertex
  of $\T'$ now lies on the bottom side of $\H'$. Hence, we also arrive
  at a contradiction in this final case. This completes the proof.
\end{proof}

To compute a longest 1-coverable subtrajectory we now have two cases, as described by Lemma~\ref{lem:characterization_k=1}. In the first case, i.e. when there is a vertex $v_i$ on the boundary of $H$, we use the existing algorithm of Gudmundsson et al.~\cite{hotspots2013} to compute the longest 1-coverable subtrajectory. It remains to handle the second case, i.e. when there are two trajectory edges passing through opposite corners of $H$. We begin by showing a useful lemma.

\begin{lemma}
  \label{obs:edge_pair_to_hotspot}
  Given a pair of non-parallel edges $e_i$ and $e_j$ of $\T$, there is
  at most one unit square $H$ such that the top left corner of $H$
  lies on $e_i$, and the bottom right corner of $H$ lies on $e_j$.
\end{lemma}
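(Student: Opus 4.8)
The plan is to reduce the two-corner incidence condition to the intersection of two lines and then invoke the non-parallel hypothesis. First I would observe that a unit axis-parallel square $H$ is completely determined by the position of a single corner: if the top-left corner sits at a point $p=(x,y)$, then $H=[x,x+1]\times[y-1,y]$ and its bottom-right corner is $p+(1,-1)$. Thus prescribing $H$ is the same as prescribing the single point $p$, and the problem becomes one about locating $p$.

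Next I would translate the bottom-right constraint into a constraint on $p$. Let $\ell_i$ and $\ell_j$ be the supporting lines of $e_i$ and $e_j$. The requirement that the top-left corner lie on $e_i$ forces $p\in\ell_i$, and the requirement that the bottom-right corner lie on $e_j$ forces $p+(1,-1)\in\ell_j$, equivalently $p\in\ell_j'$, where $\ell_j'=\ell_j-(1,-1)$ is the translate of $\ell_j$ by the fixed vector $(-1,1)$. Hence any admissible square corresponds to a point $p\in\ell_i\cap\ell_j'$.

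The key step is then uniqueness of this intersection. Since translation preserves direction, $\ell_j'$ is parallel to $\ell_j$; because $e_i$ and $e_j$ are non-parallel, $\ell_i$ and $\ell_j$ are non-parallel, and so $\ell_i$ and $\ell_j'$ are non-parallel as well. Two non-parallel lines in the plane meet in exactly one point, so there is a unique candidate $p$ and therefore at most one unit square $H$ whose top-left and bottom-right corners lie on the respective supporting lines. Restricting the corners to the actual segments $e_i$ and $e_j$ (rather than to their supporting lines) can only discard this candidate, which is precisely why the statement asserts \emph{at most} one square rather than exactly one.

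I do not expect a genuine obstacle here; the only thing to be careful about is the bookkeeping of the fixed offset $(1,-1)$ between the two corners, together with the observation that this offset is a pure translation and so cannot alter the slope of $\ell_j$. If one prefers coordinates, the two incidences become $a_i x+b_i y=c_i$ and $a_j x+b_j y=c_j-a_j+b_j$, whose coefficient determinant $a_i b_j-a_j b_i$ is nonzero exactly when the edges are non-parallel; but I would keep the coordinate-free ``intersection of two non-parallel lines'' phrasing to avoid grinding through the algebra.
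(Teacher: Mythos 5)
Your proof is correct and is essentially the paper's argument in coordinate-free form: the paper parametrizes the top-left corner along $e_i$ by its $x$-coordinate and solves the linear equation $g_i(p_x)-g_j(p_x+1)=1$, whose unique solvability is exactly your statement that $\ell_i$ meets the translate $\ell_j-(1,-1)$ in one point when the edges are non-parallel. If anything, your phrasing is slightly more robust, since it also covers vertical edges, where the paper's representation of each edge as a graph $p_y=g(p_x)$ breaks down.
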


\begin{proof}
Let $p$ be the top-left corner of $H$ and $q$ be the bottom right corner of $H$. Consider as $p$ moves along the edge $e_i$. Then $q$ also moves along a straight segment, $e_i'$, that is a translated copy of $e_i$. By the conditions of the lemma, $q$ also lies on $e_j$. Therefore, $q$ must be the intersection of $e_i'$ and $e_j$, if one exists, and the observation follows.
\end{proof}

It follows that any pair of edges $e_i,e_j$ of $\T$ generates at most a
constant number of additional candidate placements that we have to
consider. Let $\mathcal{H}_{ij}$ denote this set. Next, we argue that there are
only $O(n)$ relevant pairs of edges that we have to consider.

We define the \emph{reach} of a vertex $v_i$, denoted $r(v_i)$, as the vertex $v_j$ such that $\T[v_i,v_j]$ can be 1-covered, but
$\T[v_i,v_{j+1}]$ cannot. Let $\mathcal{H}_i = \mathcal{H}_{(i-1)j}$ denote the set of
candidate placements corresponding to $v_i$ and
$v_j=r(v_i)$. Analogously, we define the \emph{reverse reach}
$\mathit{rr}(v_j)$ of $v_j$ as the vertex $v_i$ such that
$\T[v_i,v_j]$ can be 1-covered, but $\T[v_{i-1},v_j]$ cannot, and the
set $\mathcal{H}'_j = \mathcal{H}_{(i-1)j}$. Finally, let
$\mathcal{H} = \bigcup_{i=1}^n \mathcal{H}_i \cup \mathcal{H}'_i$ be the set of placements
contributed by all reach and reverse reach pairs. By Lemma 12, $H_i$ consists of at most one element. Similarly, $H_i'$ consists of at most one element. Therefore, $H$ is the union of $2n$ sets, each with at most one element, so $|H| = O(n)$.

\begin{lemma}
  \label{lem:candidate_starting_points_one_hotspot}
  Let $p^* \in e_i$ and $q^* \in e_j$ lie on edges of $\T$, and let $H$
  be a unit square with $p^*$ in one corner, and $q^*$ in the opposite
  corner. We have that $H \in \mathcal{H}$.
\end{lemma}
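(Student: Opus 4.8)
The plan is to show that the pair of edges $(e_i,e_j)$ carrying the two opposite corners of $H$ must coincide with a reach pair or a reverse-reach pair; then $H$ is the unique placement generated by that pair (by Lemma~\ref{obs:edge_pair_to_hotspot}) and hence lies in $\mathcal{H}$. First I would fix the diagonal: assume without loss of generality that $p^*$ is the top-left and $q^*$ the bottom-right corner of $H$, the other diagonal being symmetric. Let $v_a$ be the first vertex of $\mathcal{T}$ after $p^*$ (so $e_i$ is the edge $e_{a-1}$ entering $v_a$) and $v_b$ the last vertex before $q^*$ (so $e_j=e_b$ is the edge leaving $v_b$). Since $\mathcal{T}[v_a,v_b]\subseteq\mathcal{T}[p^*,q^*]\subseteq H$, the subtrajectory $\mathcal{T}[v_a,v_b]$ is $1$-coverable, which immediately gives $r(v_a)\ge v_b$ and $\mathit{rr}(v_b)\le v_a$.

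The argument then splits on whether these two inequalities are tight. If $r(v_a)=v_b$, then $\mathcal{H}_a=\mathcal{H}_{(a-1)b}$ is generated by the pair $(e_{a-1},e_b)=(e_i,e_j)$; by Lemma~\ref{obs:edge_pair_to_hotspot} there is a unique unit square with top-left corner on $e_i$ and bottom-right corner on $e_j$, and that square is $H$, so $H\in\mathcal{H}_a\subseteq\mathcal{H}$. Symmetrically, if $\mathit{rr}(v_b)=v_a$ then $H\in\mathcal{H}'_b\subseteq\mathcal{H}$. It therefore remains only to exclude the case in which both reaches overshoot, i.e. $r(v_a)>v_b$ and $\mathit{rr}(v_b)<v_a$; here I would use that $\mathcal{T}[p^*,q^*]$ is a \emph{longest} $1$-coverable subtrajectory, the scenario forced on us by the corrected characterization of Lemma~\ref{lem:characterization_k=1}. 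The overshoot supplies witness squares $S\supseteq\mathcal{T}[v_{a-1},v_b]$ and $S'\supseteq\mathcal{T}[v_a,v_{b+1}]$, and the configuration is rigid: since $p^*$ and $q^*$ are opposite corners of $H$, $p^*$ is simultaneously the topmost and the leftmost point of $\mathcal{T}[p^*,q^*]$, while $q^*$ is simultaneously the bottommost and the rightmost. The goal is to convert this two-sided slack into a single $1$-coverable subtrajectory strictly longer than $\mathcal{T}[p^*,q^*]$, contradicting optimality.

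I expect this final step to be the genuinely hard part. The naive comparison — taking the maximal subtrajectory inside $S$, respectively $S'$ — only shows that extending back to $v_{a-1}$ costs no more than the tail beyond $v_b$, and symmetrically on the other side, which is not by itself a contradiction; moreover a single translation of $H$ cannot lengthen the covered piece at both ends at once, because the covered-length function is piecewise linear with a kink precisely at the corner-corner placement. The plan is thus to work inside the set of translations that keep the inner part $\mathcal{T}[v_a,v_b]$ covered — a convex, axis-parallel rectangular region — and to argue by a careful perturbation within this region that the two independent one-sided extensions can be combined into a strictly longer covered subtrajectory. Making this combination rigorous, rather than the routine bookkeeping of the two tight cases, is where I expect the real content of the lemma to lie.
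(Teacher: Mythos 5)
Your two tight cases are handled correctly, but your plan for the residual case is unworkable: when $r(v_a) > v_b$ and $\mathit{rr}(v_b) < v_a$ hold simultaneously there is no contradiction with optimality to be found, because this ``double overshoot'' genuinely occurs for longest $1$-coverable subtrajectories. Concretely, let $H=[0,1]^2$, $p^*=(0,1)$, $q^*=(1,0)$, and let $\mathcal{T}$ visit $v_{a-1}=(-0.04,\,1.08)$, $v_a=(0.15,\,0.7)$, $v_b=(0.9,\,0.1)$, $v_{b+1}=(1.1,\,-0.1)$ in order, so that $e_i$ has slope $-2$ and passes through $p^*$, $e_j$ has slope $-1$ and passes through $q^*$, and the rest of $\mathcal{T}$ lies far away. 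Then $\mathcal{T}[v_{a-1},v_b]$ has bounding box $[-0.04,0.9]\times[0.1,1.08]$ and $\mathcal{T}[v_a,v_{b+1}]$ has bounding box $[0.15,1.1]\times[-0.1,0.7]$, so both are $1$-coverable; hence $\mathit{rr}(v_b)=v_{a-1}<v_a$ and $r(v_a)=v_{b+1}>v_b$. Yet sliding the start point along $e_i$ changes the length of the maximal covered window at rate $\sqrt{5}-2\sqrt{2}<0$ going backwards and $\sqrt{2}-\sqrt{5}<0$ going forwards, so $\mathcal{T}[p^*,q^*]$ is the \emph{unique} longest $1$-coverable subtrajectory. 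Your case 3 is therefore not excludable by any perturbation argument; a proof must instead show that $H$ is still generated by some reach or reverse-reach pair.

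That is exactly what the paper does, via a different, exhaustive dichotomy: it asks whether $\mathcal{T}[v_{a-1},v_b]$ is $1$-coverable, i.e.\ it anchors the reach at the vertex just \emph{outside} $H$ rather than at the first vertex inside. If $\mathcal{T}[v_{a-1},v_b]$ is not coverable then, since $\mathcal{T}[v_a,v_b]\subseteq H$ is, we get $\mathit{rr}(v_b)=v_a$ exactly (your case 2). If $\mathcal{T}[v_{a-1},v_b]$ is coverable, then optimality forces $\mathcal{T}[v_{a-1},v_{b+1}]$, which properly contains $\mathcal{T}[p^*,q^*]$, to be non-coverable; hence $r(v_{a-1})=v_b$ exactly, and the reach pair $(v_{a-1},v_b)$ generates precisely the edge pair $(e_i,e_j)$ carrying $p^*$ and $q^*$, so $H\in\mathcal{H}$. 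Maximality is used only to cap $r(v_{a-1})$ at $v_b$ --- never to derive a contradiction --- and this branch subsumes your problematic case (in the example above one indeed has $r(v_{a-1})=v_b$). One caveat on indexing, which partly explains why your anchoring looked natural: the paper's displayed definition $\mathcal{H}_i=\mathcal{H}_{(i-1)j}$ pairs the edge \emph{entering} $v_i$ with the edge leaving $r(v_i)$, and your case 1 follows that literal reading; but the paper's own proof needs $\mathcal{H}_i$ to pair the edge \emph{leaving} $v_i$ with the edge leaving $r(v_i)$ (otherwise its Case 1 does not capture $H$ either), and under the literal reading the example above has $H$ escaping every reach and reverse-reach pair, so the lemma itself would be false. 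The statement is provable only under the convention that the paper's proof actually uses, and under that convention the two-case argument above closes the proof with no hard residual case.
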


\begin{proof}
  Observe that vertices $v_{i+1}$ and $v_j$ are inside $H$ whereas
  $v_i$ and $v_{j+1}$ are outside of $H$. See
  Figure~\ref{fig:uvwx}. We now distinguish between two cases,
  depending on whether $v_j$ is reachable from $v_i$ or not.

  \begin{figure}[t]
    \centering
    \includegraphics[width=0.3\textwidth]{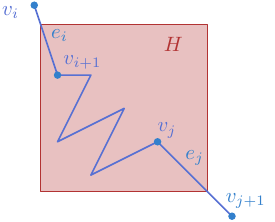}
    \caption{An optimal hotspot $H$.}
    \label{fig:uvwx}
  \end{figure}

  If $v_j$ is reachable from $v_i$ then $v_{j+1}$ is not (otherwise
  $\T[v_i,v_{j+1}] \supset \T[p^*,q^*]$ is 1-coverable, and thus
  $\T[p^*,q^*]$ is not a longest 1-coverable subtrajectory). Hence,
  $v_j=r(v_i)$ is the reach of $v_i$, and thus
  $H \in \mathcal{H}_i \subseteq \mathcal{H}$.

  If $v_j$ is not reachable from $v_i$, then $\T[v_i,v_j]$ cannot be
  1-covered. However, since $v_{i+1}$ and $v_j$ are contained in $H$
  the subtrajectory $\T[v_{i+1},v_j]$ \emph{can} be 1-covered. It
  follows that $v_{i+1}$ is the reverse reach of $v_j$, and thus
  $H \in \mathcal{H}'_j \subseteq \mathcal{H}$.
\end{proof}

Once we have the reach $r(v_i)$ and the reverse reach
$\mathit{rr}(v_i)$ for every vertex $v_i$ we can easily construct $\mathcal{H}$
in linear time (given a pair of edges $e_i,e_j$ we can construct the
unit squares for which one corner lies on $e_i$ and the opposite
corner lies on $e_j$ in constant time). We can use Tool~\ref{tool:bb}
to test each candidate in $O(\log n)$ time. So all that remains is to
compute the reach of every vertex of $\T$; computing the
reverse reach is analogous.

\begin{lemma}
  \label{lem:compute_reach}
  We can compute $r(v_i)$, for each vertex $v_i \in \T$, in
  $O(n\log n)$ time in total.
\end{lemma}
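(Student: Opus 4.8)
The plan is to exploit a monotonicity property of the reach together with the bounding box data structure of Tool~\ref{tool:bb}. The first key observation is that coverability of a subtrajectory depends only on its bounding box: since every point of $\mathcal{T}[v_i,v_j]$ is a convex combination of two consecutive vertices, the bounding box of $\mathcal{T}[v_i,v_j]$ coincides with the bounding box of the vertex set $\{v_i,\dots,v_j\}$, and a subtrajectory fits in a single unit square precisely when this bounding box has both width and height at most $1$. Thus $\mathcal{T}[v_i,v_j]$ is 1-coverable if and only if its bounding box is contained in a $1\times 1$ box, a condition we can test directly once the box is known.

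Next I would establish that the reach is monotone, in the sense that $r(v_i)$ is non-decreasing in $i$. This follows because the bounding box of $\mathcal{T}[v_{i+1},v_j]$ is contained in the bounding box of $\mathcal{T}[v_i,v_j]$; hence if $\mathcal{T}[v_i,v_j]$ is 1-coverable then so is $\mathcal{T}[v_{i+1},v_j]$, which gives $r(v_{i+1}) \ge r(v_i)$. Monotonicity is what makes a sliding-window (two-pointer) sweep correct: I would maintain two indices $i$ and $j$, starting from $i=1$, and repeatedly advance $j$ as long as $\mathcal{T}[v_i,v_j]$ remains 1-coverable. When advancing $j$ would break coverability, I record $r(v_i)=v_j$, increment $i$, and continue the sweep \emph{without resetting} $j$. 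Because the reach never decreases, $j$ only ever moves forward.

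The running time then follows from an amortization argument. The index $i$ is incremented $n$ times, and by monotonicity $j$ is incremented at most $n$ times over the entire sweep, so we perform $O(n)$ coverability tests in total. Each test amounts to one bounding box query on the current window $\mathcal{T}[v_i,v_j]$, which Tool~\ref{tool:bb} answers in $O(\log n)$ time, followed by a constant-time comparison of the box dimensions against $1$. This yields $O(n\log n)$ time overall, as claimed; the reverse reach is computed by the symmetric sweep with $i$ and $j$ decreasing.

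The main obstacle is not the sweep itself, which is routine once the structure is in place, but rather pinning down the two supporting facts cleanly: that coverability reduces exactly to the bounding-box-size test (so that the vertex-only definition of reach captures the full subtrajectory), and that the monotonicity inequality holds so the two-pointer invariant is valid. I would also take care with boundary cases, in particular when $\mathcal{T}[v_i,v_{i+1}]$ is already too large to be 1-covered so that $r(v_i)$ degenerates, and with the indexing at the ends of the trajectory.
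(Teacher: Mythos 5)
Your proposal is correct and follows essentially the same approach as the paper: a sliding-window (two-pointer) sweep over the vertices, using Tool~\ref{tool:bb} to test 1-coverability of the current window in $O(\log n)$ time per test, with $O(n)$ tests in total. The only difference is that you explicitly justify the two facts the paper leaves implicit — that 1-coverability reduces to a bounding-box size test and that the reach is monotone in $i$ — which strengthens rather than changes the argument.
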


\begin{proof}
We can prove this result using a sliding window approach. For $v_1$ we just naively
  test the subtrajectories $\T[v_1,v_j]$, starting with $j=1$ until we
  find a $\T[v_1,v_{j+1}]$ that we can no longer cover. Hence
  $r(v_1)=v_j$. To compute the reach of $v_{i+1}$, we now simply
  continue this procedure starting with $v_j=r(v_i)$. In total this
  requires $O(n)$ calls to Tool~\ref{tool:bb}, which take $O(\log n)$
  time each. This proves the result.
\end{proof}

Lemma~\ref{lem:compute_reach} gives the following result.

\begin{theorem}
  \label{thm:longest_1-coverable}
  Given a trajectory $\T$ with $n$ vertices, there is an $O(n \log n)$
  time algorithm to compute a longest 1-coverable subtrajectory of $\T$.
\end{theorem}

\section{Problem 3 for 
\texorpdfstring{$k=2$}{k=2}: A Longest 2-Coverable Subtrajectory}
\label{sec:problem3_k=2}

In this section we reuse some of the observations from
Section~\ref{sec:A_Longest_1-coverable_subtrajectory} to develop an
$O(n2^{\alpha(n)}\log^2 n)$ time algorithm to compute a longest
$k$-coverable subtrajectory for $k=2$. In particular, we will compute
the first such longest $2$-coverable subtrajectory $\T[p^*,q^*]$ of
\T, and the squares $\H_1$ and $\H_2$ that cover $\T[p^*,q^*]$ (and
such that $p^* \in \H_1$). We refer to $\T[p^*,q^*]$ as the optimal
subtrajectory.

Our algorithm to compute $\T[p^*,q^*]$ consists of five steps. In Section~\ref{sub:Characterizing_the_set_of_starting_points}, we construct a discrete set $S$ of candidate starting points
on $\T$. In Section~\ref{subsec:proving_pstar_in_S}, we prove $p^* \in S$, where $p^*$ is the starting point of the optimal trajectory and $S$ is the set of candidate starting points. In Section~\ref{sub:reach_of_a_point}, we generalise the notion of the reach, and we generalise Lemma~\ref{lem:compute_reach} to obtain an algorithm for computing the reach. In Section~\ref{sub:set_of_starting_points} we show how to compute all six types of candidate starting points efficiently. Finally, in Section~\ref{sub:Computing_a_longest_subtrajectory}, we compute the reach of all candidate starting points to obtain the optimal subtrajectory.

\subsection{Identifying the set of starting points}
\label{sub:Characterizing_the_set_of_starting_points}

In this section we identify a discrete set $S$ of candidate starting
points on \T. In the subsequent section we prove $p^* \in S$. We define six types of events, depending on different types of starting points, as follows. Given a trajectory $\T$, $p$ is a

\begin{figure}[tb]
    \centering
    \includegraphics[width=0.3\textwidth]{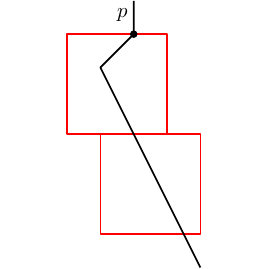}
    \includegraphics[width=0.3\textwidth]{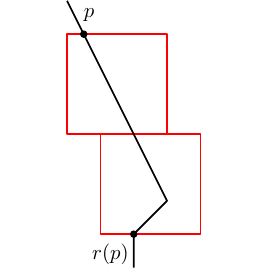}
    \includegraphics[width=0.3\textwidth]{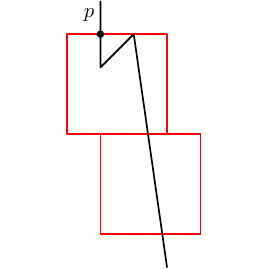}
    \caption{A vertex event (left), a reach event (middle), and a bounding box event (right).}
    \label{fig:vertex_and_reach_event}
\end{figure}

\begin{description}[font=\normalfont]
\item[\textbf{vertex} event]
(see Figure~\ref{fig:vertex_and_reach_event} (left))  if and only if\\
  $p$ is a vertex of \T.
\item[\textbf{reach} event]
(see Figure~\ref{fig:vertex_and_reach_event} (middle))
  if and only if\\
  $r(p)$ is a vertex of $\T$, and no point $q \prec p$ satisfies
  $r(q) = r(p)$.
\item[\textbf{bounding box}] 
event (see Figure~\ref{fig:vertex_and_reach_event}
  (right)) if and only if\\
the topmost vertex of $\T$ within the subtrajectory
  $\T[p, r(p)]$ has the same $y$-coordinate as $p$.
\item[\textbf{bridge} event]
(see
  Figure~\ref{fig:bridge_event_and_upper_envelope_event} (left)) if and
  only if
  \begin{itemize}[nosep]
    \item the point $p$ is the leftmost point on $\T[p, r(p)]$, and
    \item the point $p$ is one unit to the left of a point $u \in \T[p, r(p)]$, and
    \item the point $u$ is one unit above the lowest vertex of $\T$ in the subtrajectory $\T[p, r(p)]$.
    \end{itemize}
\item[\textbf{upper envelope} event]
(see
  Figure~\ref{fig:bridge_event_and_upper_envelope_event}) if and only
  if
  \begin{itemize}[nosep]
    \item the point $p$ is the leftmost point on $\T[p, r(p)]$, and
    \item the point $p$ is one unit to the left of a point $u \in \T[p, r(p)]$, and
    \item the point $u$ is an vertex on the upper envelope of $\T[p, r(p)]$.
    \end{itemize}
\item[\textbf{special configuration} event]
(see
  Figure~\ref{fig:special_configuration_event}) if and only if\\
there is a covering of squares $\H_1$ and $\H_2$ so that $\H_1$ contains the top-left corner of $\H_2$, and either:
\begin{itemize}[nosep]
    \item point $p$ is in the top-right corner of $\H_1$ and $r(p)$ is in the bottom-left corner of $\H_1$, or
    \item point $p$ is in the top-left corner of $\H_1$ and the trajectory $\T$ passes through the bottom-right corner of $\H_1$, or
    \item point $p$ is in the top-left corner of $\H_1$, $r(p)$ is in the bottom-right corner of $\H_2$, and the trajectory $\T$ passes through the two intersections of $\H_1$ and $\H_2$.
    \end{itemize}
\end{description}

\begin{figure}[tb]
    \centering
    \begin{minipage}{0.45\textwidth}
        \centering
        \includegraphics[width=0.7\textwidth]{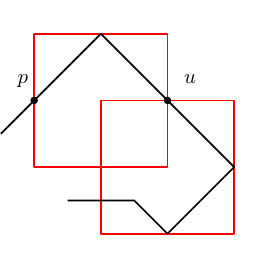}
    \end{minipage}
    \begin{minipage}{0.45\textwidth}
        \centering
    \includegraphics[width=0.7\textwidth]{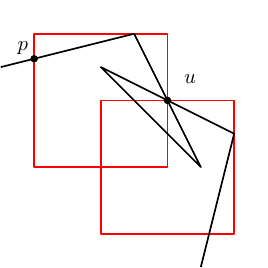}
    \end{minipage}
    \caption{Examples of a bridge event (left), and an upper envelope event box (right).}
    \label{fig:bridge_event_and_upper_envelope_event}
\end{figure}

\begin{figure}[tb]
    \centering
    \includegraphics[width=\textwidth]{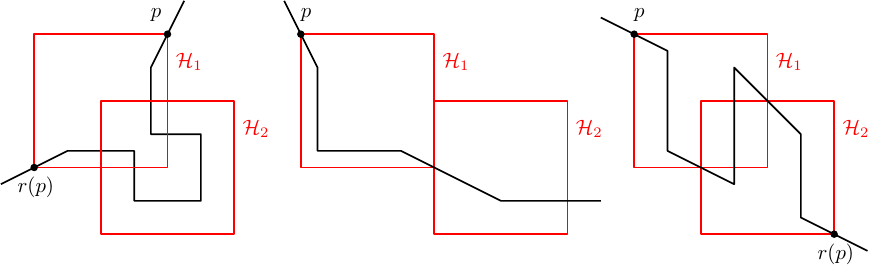}
    \caption{Examples of the three types of special configuration events.}
    \label{fig:special_configuration_event}
\end{figure}

Next, we use these six event types to define the set of candidate starting points in Definition~\ref{defn:starting_points_are_vertices_of_t3}. Note that in Definition~\ref{defn:starting_points_are_vertices_of_t3}, we generalise bounding box, bridge, upper envelope and special configuration events to include the events for all four cardinal directions, not just the upwards cardinal direction. For example, a point $p$ for which
the bottom-most vertex of $\T[p,r(p)]$ also has $y$-coordinate $p_y$
is also a bounding box event.

\begin{definition}
\label{defn:starting_points_are_vertices_of_t3}
Let $\T_1$ be a copy of $\T$ with the following additional points added to the set of vertices of $\T_1$:
\begin{itemize}
    \item all the vertex, reach, bounding box, and bridge events of $\T$ for all four cardinal directions.
\end{itemize}
Next, let $\T_2$ be a copy of $\T_1$ with the following additional points added to the set of  vertices of $\T_2$:
\begin{itemize}
    \item all the upper envelope events of $\T_1$ for all four cardinal directions.
\end{itemize}
Finally, let $\T_3$ be a copy of $\T_2$ with the following additional points added to the set of vertices of $\T_3$:
\begin{itemize}
    \item all the special configuration events of $\T_2$ for all configurations of $\mathcal{H}_1$~and~$\mathcal{H}_2$.
\end{itemize}
\end{definition}

Finally, we define $S$ to be the vertices of $\T_3$. This completes the characterization of $S$, the set of candidate starting points.

\subsection{Proof that \texorpdfstring{$p^* \in S$}{p* in S}}
\label{subsec:proving_pstar_in_S}

In this section, we prove that the starting point $p^*$ of the optimal
subtrajectory is a candidate starting point in $S$. Our main strategy
is to argue that if $p^*$ is not a vertex of $\T_2$, either $p^*$ or
$q^*$ must be in a corner of one of the two squares
(Lemma~\ref{lem:not_corner}). Using a careful analysis, we then argue
that the solution must actually be a special configuration event, and
thus $p^* \in S$. Next, we define bridging points, which help us to
establish some useful technical lemmas.

Given a point $p$, a point $t \in \T[p,r(p)]$ is said to be a
\emph{bridging point} for point $p$ if there exists
covering~$\mathcal H_1 \cup \mathcal H_2$ of~$\mathcal{T}[p, r(p)]$ so
that, assuming $p$ is on the boundary of $\mathcal H_1$:

\begin{itemize}[noitemsep]
\item The point $t$ lies on the boundary of both $\mathcal H_1$ and $\mathcal H_2$, and
\item The points $t$ and $p$ are on opposite sides of $\mathcal H_1$.
\end{itemize}
\begin{figure}[tb]
    \centering
    \includegraphics{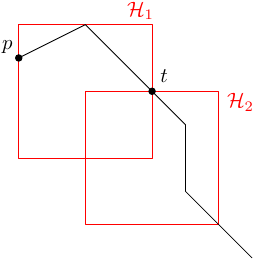}
    \caption{An example of a bridging point $t$ for the starting point $p$.}
    \label{fig:bridging_point}
\end{figure}

See Figure~\ref{fig:bridging_point} for an example. For a covering
$\H_1 \cup \H_2$ of $\T[p,r(p)]=\T[p,q]$, a side of $\H_1$ and $\H_2$ that
contains $p$ (respectively $q$) is a \emph{$p$-side} (respectively
\emph{$q$-side}). A side that contains a bridging point is a
\emph{$b$-side}. If two sides are part of the same square and have the
same orientation (vertical or horizontal), they are opposite to each other. 

\begin{observation}
  \label{obs:bridging_points}
  A bridging point $x$ either lies on both the upper and right
  envelopes of the subtrajectory, or on both the lower and left
  envelopes of the subtrajectory.
\end{observation}

\begin{lemma}
  \label{lem:bridging_lemma_part_1}
  Let $\mathcal{T}[p,q]$ be an optimal 2-coverable subtrajectory and
  assume that $p$ is not a vertex of $\mathcal{T}_2$. There is a
  covering of $\mathcal{T}[p,q]$ by squares
  $\mathcal H_1 \cup \mathcal H_2$ so that any side opposite to a
  $p$-side must either be a $b$-side or a $q$-side.
\end{lemma}

\begin{proof}
  Assume without loss of generality that $p$ lies on the left side of
  $\H_1$. We now argue that the right side of $\H_1$ is a $b$-side or
  a $q$-side.

  Since $p$ is not a vertex of $\mathcal{T}_2$, it lies between two
  consecutive vertices $s_i$ and $s_{i+1}$ of $\mathcal{T}_2$, as
  shown in Figure~\ref{fig:events_p_left_cases}.
  The segment $s_i s_{i+1}$ is a line segment. Consider the situation
  if we moved $\mathcal H_1$ to the left by an arbitrarily small
  amount. This would allow us to cover additional length of
  $s_i s_{i+1}$ on the left side of $\mathcal H_1$. Since
  $\mathcal{T}[p,q]$ is optimal, there must be a point
  $t \in \mathcal{T}[p,q]$ on the right side of $\mathcal H_1$ that
  does not lie in the interior of $\H_2$ which is lost even as we move left by an
  arbitrarily small amount. There are three possible cases for this
  point $t$, as shown in Figure~\ref{fig:events_p_left_cases}:

  \begin{figure}[t]
    \centering
    \includegraphics{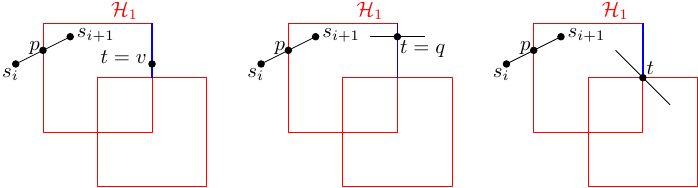}
    \caption{The three cases for the point $t$ that is lost when
      moving $\mathcal H_1$ to the left.}
    \label{fig:events_p_left_cases}
  \end{figure}

  \begin{description}
  \item[Case $t$ is a vertex $v$ of \T:] This would make $p$ an
    upper envelope event for vertex $v$. This would mean that $p$ is a
    vertex of $\mathcal{T}_2$, which contradicts the lemma
    statement. Hence, this case cannot actually occur.
  \item[Case $t$ is point $q$:] This means the right
    side of $\H_1$ is a $q$-side, as desired.
  \item[Case $t$ is an interior point of $\mathcal{T}$:] This means \T, in
    particular $\T[p,q]$, continues into $\H_2$ at $t$. This means $t$
    must lie on the boundary of $\H_2$, and thus the
    right side of $\H_1$ is a $b$-side, as desired.
  \end{description}

  Note that if $p$ lies on a corner of $\H_1$, say e.g. the top-left
  corner, there are two $p$-sides. Using the same argument as above,
  it then follows that the bottom-side of $\H_1$ must also either be a
  $q$-side or a $b$-side.
\end{proof}

\begin{lemma}
  \label{lem:bridging_lemma_part_2}
  Let $\mathcal{T}[p,q]$ be an optimal 2-coverable subtrajectory and
  assume that $p$ is not a vertex of $\mathcal{T}_2$. There is a
  covering of $\mathcal{T}[p,q]$ by squares
  $\mathcal H_1 \cup \mathcal H_2$ so that any side of $\H_2$ opposite
  to a $b$-side must be a $q$-side.
\end{lemma}

\begin{proof}
  Let $t$ be a bridging point that lies on the right side of $\H_1$,
  and the top side of $\H_2$. The other cases are symmetric. We then
  have to show that the bottom side of $\H_2$ is a $q$-side.

  To this end we move $\mathcal H_1$ left by an arbitrarily small
  amount and then move $\mathcal H_2$ upwards by an arbitrarily small
  amount to cover the neighborhood of the bridging point. By the same
  argument as in Lemma~\ref{lem:bridging_lemma_part_1}, there must be
  a point on $\mathcal{T}[p,q]$ that we lost on the bottom edge of
  $\mathcal H_2$. There are two cases as shown in
  Figure~\ref{fig:events_p_left_bridge_down_cases}.

  \begin{figure}[t]
    \centering
    \includegraphics{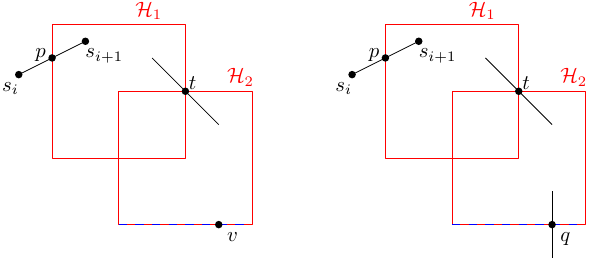}
    \caption{Cases when the point $p$ is not in a corner and there is a bridging point.}
    \label{fig:events_p_left_bridge_down_cases}
  \end{figure}

  The first case is if the point on the bottom edge of $\mathcal H_2$
  is a vertex. Refer to the left diagram in
  Figure~\ref{fig:events_p_left_bridge_down_cases}.
  This would make $p$ a bridge event and thus a vertex of
  $\mathcal{T}_2$, which would be a contradiction. Thus, the
  trajectory $\mathcal{T}$ must exit the covering
  $\mathcal H_1 \cup \mathcal H_2$ on the bottom edge. The point on
  the bottom edge is $q$ and so the bottom side of $\mathcal H_2$ is a
  $q$-side, as required.

  In case $t$ also lies on a second side of $\H_2$, say the left side,
  then we also have to argue that the right side of $\H_2$ is a
  $q$-side. Suppose that the right side of $\H_2$ does not contain
  $q$, then it must contain a vertex $w$ of $\T[p,r(p)]$ (otherwise we
  could again shift $\H_2$ left). However, since $p$ lies on the left
  side of $\H_1$ (by definition of $t$), this would make $p$ a reach
  event of $w$, and thus a vertex of $\T_2$. Contradiction.
\end{proof}

Next we use the above lemmas to show that either $p$ or $q$ is in a
corner of $\mathcal H_1$ or $\mathcal H_2$.

\begin{lemma}[The corner lemma]
  \label{lem:not_corner}
  Suppose $p$ is not a vertex of $\mathcal{T}_2$ and $\mathcal{T}[p,q]$ is optimal. Then we have that either $p$ is in a corner of $\mathcal H_1$ or $\mathcal H_2$ or that $q$ is in a corner of $\mathcal H_1$~or~$\mathcal H_2$.
\end{lemma}

\begin{proof}
We assume by contradiction that that $p$ is not a vertex of $\mathcal{T}_2$, and there is a covering $\mathcal H_1 \cup \mathcal H_2$ of the subtrajectory $\mathcal{T}[p,q]$ where neither of $p$ nor $q$ are in a corner of $\mathcal H_1$ or $\mathcal H_2$. We show that this implies that the subtrajectory $\mathcal{T}[p,q]$ cannot be optimal.

We have two cases. Either the point $p$ lies on the left side of $\mathcal H_1$ or the right side of $\mathcal H_1$. All other cases are symmetric.

\begin{description}
  \item[Case 1: $p$ is on the left side of $\mathcal H_1$.] The left side of $\mathcal H_1$ is a $p$-side, so by Lemma~\ref{lem:bridging_lemma_part_1}, the right side of $\mathcal H_1$ is either a $q$-side or a $b$-side. We consider two subcases.

\textbf{Case 1.1: $p$ is on the left side of $\mathcal H_1$ and $q$ is on the right side.} See Figure~\ref{fig:four_cases_p_left_q_right}. 
\begin{figure}[ht]
    \centering
    \includegraphics{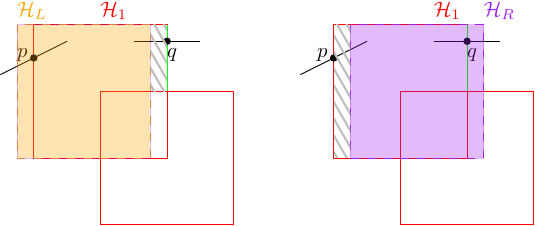}
    \caption{The new squares $\mathcal H_L$ and $\mathcal H_R$ are constructed to the left and right of $\mathcal H_1$ respectively.}
    \label{fig:four_cases_p_left_q_right}
\end{figure}

We show that $\mathcal{T}[p,q]$ cannot be optimal by constructing an earlier subtrajectory with the same length as $\mathcal{T}[p,q]$. We do this by constructing a subtrajectory covered by a square slightly to the left of $\mathcal H_1$, and another subtrajectory covered by a square slightly to the right of $\mathcal H_1$. 

If there is a vertex on the left side of $\mathcal H_1$, then 

that would make $p$ a bounding box event and a vertex of $\mathcal{T}_1$. If there is a vertex on the right side of $\mathcal H_1$, then 

that would make $p$ an upper envelope event and vertex of $\mathcal{T}_2$. Since $p$ is not a vertex of $\mathcal{T}_2$, we must have that there are no vertices on the left side or the right side of $\mathcal H_1$. 

Now, take $\mathcal H_1$ and move it left and right by the same arbitrarily small amount, and call these new squares $\mathcal H_L$ and $\mathcal H_R$ respectively. For $\mathcal H_L$ on the left diagram of Figure~\ref{fig:four_cases_p_left_q_right}, because there are no vertices or bridging points on the right edge, we can choose the movement small enough so that the gray diagonally shaded region is empty. We can do the same for $\mathcal H_R$ on the right diagram, because there are no vertices on the left edge, so we can choose the movement small enough so that the gray diagonally shaded region is also empty. So the only lengths gained or lost are those on the segment $e_p$ that contains $p$, or the segment $e_q$ that contains $q$.

Suppose that $\mathcal{T}[p,q]$ has length $L$ and by assumption is maximal. Let the length of trajectory we gain with $\mathcal H_L$ on segment $e_p$ be $\ell_p$ and the length of trajectory we lose on segment $e_q$ be $\ell_q$. By symmetry and the fact that $p$ and $q$ are not in corners of $\mathcal H_1$, we have that $\mathcal H_R$ loses the same amount $\ell_p$ and gains the same amount $\ell_q$. Therefore, we have trajectories close to $\mathcal{T}[p,q]$ with lengths $L$, $L-\ell_p+\ell_q$ and $L+\ell_p-\ell_q$ respectively. Since $L$ is maximal, we must have $\ell_p = \ell_q$. But now we have an earlier trajectory with the same length as $\mathcal{T}[p,q]$, so $\mathcal{T}[p,q]$ is not optimal.

\textbf{Case 1.2: $p$ is on the left side of $\mathcal H_1$ and there is a bridging point on the right side.} See Figure~\ref{fig:four_cases_p_left_q_down}.

\begin{figure}[ht]
    \centering
    \includegraphics{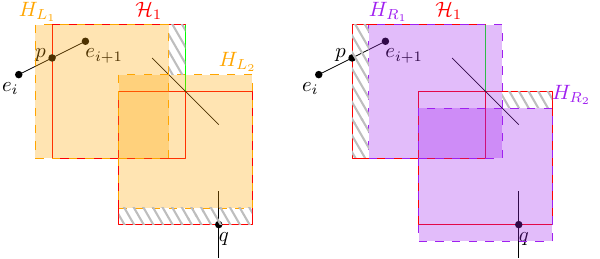}
    \caption{New hotspot positions for starting slightly before and after $p$.}
    \label{fig:four_cases_p_left_q_down}
\end{figure}

It follows from Observation~\ref{obs:bridging_points} that this
bridging point must lie on the top side of $\H_2$ (as it must lie on
the upper and right envelopes of $\T[p,r(p)]$). Then,
Lemma~\ref{lem:bridging_lemma_part_2} tells us that the bottom side of
$\H_2$ must be a $q$-side.

We now follow a similar shifting argument as before. This time we
shift both $\H_1$ and $\H_2$. We move $\mathcal H_1$ left to form
$\mathcal H_{L_1}$, and $\mathcal H_2$ up to form $\mathcal H_{L_2}$
and cover the part of the upper envelope left uncovered by
$\mathcal H_{L_1}$. We apply the opposite movements to obtain
$\mathcal H_{R_1}$ and $\mathcal H_{R_2}$ respectively.

Again, the gray regions on the left and right sides of $\mathcal H_1$ are empty if we choose the movement small enough, otherwise $p$ would be a vertex of $\mathcal{T}_2$. For $\mathcal H_2$ we have an analogous reason but this time it would make $q$ a vertex of $\mathcal{T}_2$. Therefore, the only parts of the trajectory gained or lost are those close to $p$ or $q$. By the same argument as before, these lengths $\ell_p$ and $\ell_q$ are the same for $\mathcal H_{L_1} \cup \mathcal H_{L_2}$ and $\mathcal H_{R_1} \cup \mathcal H_{R_2}$. Therefore, we again deduce that in order for $\mathcal{T}[p,q]$ to be optimal, $\ell_p$ and $\ell_q$ must be equal, but then $\mathcal{T}[p,q]$ is not the earliest optimal trajectory.
\item[Case 2: $p$ is on the right side of $\mathcal H_1$.] By
  Lemma~\ref{lem:bridging_lemma_part_1}, the left side of $\mathcal H_1$ is a
  $q$-side, since it cannot be a $b$-side (by
    Observation~\ref{obs:bridging_points}).
  The proof of this case is exactly the same as Case~1. We use the same construction as the one shown in Figure~\ref{fig:four_cases_p_left_q_right}.
\end{description}

In all cases, if $p$ is not a vertex of $\T_2$ and $p$ and $q$ are not
in corner positions of $\mathcal H_1$ and $\mathcal H_2$, then
$\mathcal{T}[p,q]$ is not an optimal subtrajectory.
\end{proof}

Finally, we show that $p$ is not only a corner of $\mathcal H_1$ or
$\mathcal H_2$, but also that $p$ is in fact a special configuration
event.

\begin{lemma}
\label{lem:all_events}
Suppose that~$\mathcal{T}[p,q]$ is optimal and that $p$ is not a vertex of $\mathcal{T}_2$. Then $p$ is a special configuration event.
\end{lemma}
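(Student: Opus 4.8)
The plan is to feed the two structural results just proved into one another: the corollary to Lemma~\ref{lem:not_corner} guarantees that one of $p$ and $q=r(p)$ lies in a corner of $\mathcal H_1$ or $\mathcal H_2$, while Lemma~\ref{lem:bridging_lemma} constrains how the eight sides of the two squares may be labelled as $p$-sides, $q$-sides, and $b$-sides. I would use the corner to seed a labelling and then use the two ``opposite-side'' implications of the bridging lemma to propagate that labelling through both squares, until the covering $\mathcal H_1\cup\mathcal H_2$ is forced into one of the three patterns of Definition~\ref{defn:special_configuration_event}.

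First I would fix an orientation. Since the events of Definition~\ref{defn:special_configuration_event} are ultimately required in all four cardinal directions, I am free to apply a rotation or reflection so that $\mathcal H_1$ is the upper-left member of the overlapping pair and contains the top-left corner of $\mathcal H_2$; this is exactly the canonical picture of the definition. By the corollary to Lemma~\ref{lem:not_corner}, I then split into the case that $p$ occupies a corner and the case that $q$ does, the latter being handled by the geometric $p\leftrightarrow q$ reflection.

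Next, starting from the corner occupied by $p$, I would read off the labels of the remaining sides. A corner is the meeting of two $p$-sides, so by part~1 of Lemma~\ref{lem:bridging_lemma} each of the two opposite sides must be a $b$-side or a $q$-side. Whenever a side is a $b$-side, its bridging point also lies on the boundary of the other square (Definition~\ref{defn:bridging_point}), so part~2 of the bridging lemma carries the labelling across into $\mathcal H_2$, where the side opposite the $b$-side is again a $p$-side or a $q$-side; because $p$ sits at the far upper-left corner of $\mathcal H_1$ and hence outside $\mathcal H_2$, that side must in fact be a $q$-side. Iterating this propagation exhausts the sides of both squares and leaves only a constant number of globally consistent labellings. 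For each one I would check directly that the positions of $p$, of $q$, and of the bridging point(s) match an enumerated pattern: two opposite $q$-sides forcing $q$ into the corner of $\mathcal H_1$ opposite $p$ gives configuration~1; a single bridging point at the bottom-right corner of $\mathcal H_1$ (which lies interior to $\mathcal H_2$) gives configuration~2; and bridging points at the two intersections of $\partial\mathcal H_1$ and $\partial\mathcal H_2$, together with $q$ in the far corner of $\mathcal H_2$, give configuration~3.

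The main obstacle will be the bookkeeping of this case analysis, and in particular the degenerate placements where a bridging point or the endpoint $q$ falls exactly on a corner of a square or on an intersection of the two boundaries. There I must confirm that ``$\mathcal T$ passes through'' the relevant corner or intersection, as configurations~2 and~3 demand, rather than merely grazing it, and that no consistent labelling escapes the three listed patterns. I would also take care that the symmetry reduction is invoked only at the geometric level, since the ``earliest optimal'' tie-breaking convention breaks the literal $p\leftrightarrow q$ symmetry. Once every consistent labelling has been matched to configurations~1--3 in the appropriate cardinal direction, it follows that $p$ is a special configuration event.
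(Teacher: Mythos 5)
Your proposal follows essentially the same route as the paper's proof: invoke the corollary of Lemma~\ref{lem:not_corner} to place $p$ (or $q$) in a corner, then apply the two opposite-side implications of Lemma~\ref{lem:bridging_lemma} to label the remaining sides as $q$-sides or $b$-sides and propagate the labelling into $\mathcal H_2$, arriving at exactly the paper's three outcomes (configuration~1 from $p$ and $q$ in opposite corners of $\mathcal H_1$, configuration~2 from a single bridging point at the corner of $\mathcal H_1$, configuration~3 from bridging points at both intersections with $q$ in the far corner of $\mathcal H_2$). If anything, you are more explicit than the paper about the case where $q$ rather than $p$ occupies the corner, which the paper dispatches with a ``without loss of generality''.
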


\begin{proof}
By Lemma~\ref{lem:not_corner}, we must have $p$ or $q$ be in a corner of $\mathcal H_1$ or $\mathcal H_2$. Without loss of generality, suppose that $p$ is in a corner of $\mathcal H_1$. Up to rotation this leaves only two cases, either $p$ is in the top-left corner of $\mathcal H_1$, or $p$ is in the top-right corner of $\mathcal H_1$.

\textbf{Case 1: $p$ is in the top-left corner.} Consider the sides
opposite to $p$ on $\mathcal H_1$. These would be the bottom side and
right side of $\mathcal H_1$. Lemma~\ref{lem:bridging_lemma_part_1} implies
both these sides must contain either $q$ or a bridging point. Since
$\T[p,r(p)]$ enters $\H_2$, at least one of these two sides must
contain a bridging point.

Moreover, applying Lemma~\ref{lem:bridging_lemma_part_2} to the
bridging point implies that $q$ is in fact on the square
$\mathcal H_2$ and not $\mathcal H_1$. Therefore, both sides opposite
$p$ contain a bridging point, and $q$ is on $\mathcal H_2$. There are
two subcases. Either there are two different bridging points on the
bottom and right sides of $\mathcal H_1$, or there is a single
bridging point in the bottom-right corner of $\mathcal H_1$. See
Figure~\ref{fig:four_cases_p_top-left}.

\begin{figure}[ht]
    \centering
    \includegraphics{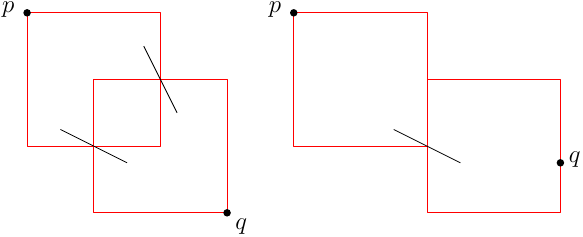}
    \caption{The special configuration events in Case 1.}
    \label{fig:four_cases_p_top-left}
\end{figure}

In the first subcase, there are two different bridging points on the bottom and right sides of $\mathcal H_1$, see Figure~\ref{fig:four_cases_p_top-left} left. Therefore, there are two bridging points, on the top and left edges of $\mathcal H_2$. We apply Lemma~\ref{lem:bridging_lemma_part_2} on the two bridging points. The bridging point on the top edge of $\mathcal H_2$ implies that $q$ must be on the bottom edge of $\mathcal H_2$, whereas the bridging point on the left edge implies that $q$ must be on the right edge of $\mathcal H_2$. So $q$ is in the bottom-right corner of $\mathcal H_2$. Therefore, $p$ is a special configuration event.

In the second subcase, there is a single bridging point in the bottom-right corner of $\mathcal H_1$. See Figure~\ref{fig:four_cases_p_top-right} right. This makes $p$ a special configuration event.

\textbf{Case 2: $p$ is in the top-right corner.}
Lemma~\ref{lem:bridging_lemma_part_1} implies that the left and bottom edges
of $\mathcal H_1$ must contain either $q$ or a bridging
point. However, by Observation~\ref{obs:bridging_points}, the left
edge of $\mathcal H_1$ cannot contain a bridging point, and thus $q$ lies on $\mathcal H_1$. Supposing there was a bridging point on
the bottom edge of $\mathcal H_1$,
Lemma~\ref{lem:bridging_lemma_part_2} would again imply that $q$ is on the right edge of $\mathcal H_2$, contradicting the fact that $q$ is on the left edge of $\mathcal H_1$. Therefore, there are no bridging points, $p$ is in the top-right corner of $\mathcal H_1$ and $q$ is on the bottom-left corner of $\mathcal H_1$. See Figure~\ref{fig:four_cases_p_top-right}. This makes $p$ a special configuration event.

\begin{figure}[ht]
    \centering
    \includegraphics{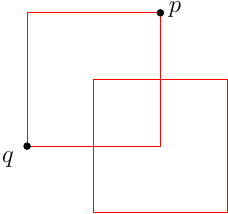}
    \caption{The special configuration event in Case 2.}
    \label{fig:four_cases_p_top-right}
\end{figure}
\end{proof}

By Lemma~\ref{lem:all_events}, the starting point of the optimal
subtrajectory is either a vertex of $\mathcal T_2$ or a special
configuration event. Hence, the starting point $p^*$ of the optimal
subtrajectory must be a vertex of $\mathcal T_3$, and thus
$p^* \in S$. We summarise with the following Theorem.

\begin{theorem}
  \label{thm:starting_point_guaranteed_to_be_in_t3}
  The set $S$ of vertices of $\mathcal T_3$ is guaranteed to contain
  the starting point of a longest coverable subtrajectory of $\T$.
\end{theorem}

\subsection{Computing the reach of a point}
\label{sub:reach_of_a_point}

In this section we describe how, given a candidate starting point $p$,
we can compute the longest 2-coverable subtrajectory starting at
$p$. We modify the data structure in Theorem~\ref{thm:problem2_k=2},
i.e. the data structure for answering whether a given subtrajectory is
2-coverable, to answer such reach queries. We do so by applying
parametric search to the query procedure. Note that applying a simple
binary search will give us only the edge containing
$r(p)$. Furthermore, even given this edge it is unclear how to find
$r(p)$ itself, as the squares may still shift, depending on the
exact position of $r(p)$.

\begin{lemma}
  \label{lem:reach_data_structure}
  Let $\T$ be a trajectory with $n$ vertices. After $O(n\alpha(n)\log n)$
  preprocessing time, $\T$ can be stored using $O(n\alpha(n)\log n)$ space, so
  that given a query point $p$ on $\T$ it can compute the reach $r(p)$
  of $p$ in $O(\log^2 n)$ time.
\end{lemma}

\begin{proof}
  We would like to compute the maximum value $q \in [0,1]$ so that
  $\T[p,q]$ is 2-coverable. The decision version of this optimisation
  problem is to decide whether a given subtrajectory $\T[p,q]$ is
  2-coverable. This decision version is monotone since for any
  $q'$ with $q \prec q'$, the subtrajectory $\T[p,q']$ contains the subtrajectory
  $\T[p,q]$. After $O(n\alpha(n) \log n)$ preprocessing time
  Theorem~\ref{thm:problem2_k=2} gives us a comparison-based
  algorithm, the query procedure, that solves the decision problem in
  $O(\log n)$ time. The sequential version of parametric
  search~\cite{megiddo1981applying} states that if $T$ is the running
  time of the sequential algorithm, the optimisation algorithm takes
  $O(T^2)$ time. In our case, the reach can be answered in
  $O(\log^2 n)$ time as required.
\end{proof}

\begin{corollary}
  \label{cor:compute_all_reaches}
  Given a trajectory $\T$, and a set of $m$ candidate starting points on
  $\T$, we can compute the longest 2-coverable subtrajectory that starts
  at one of those points in $O(n\alpha(n)\log n + m\log^2 n)$
  time.
\end{corollary}

\subsection{Computing the set of starting points}
\label{sub:set_of_starting_points}

Next, we bound the number of events, and thus the number of candidate starting
points. We also provide algorithms for computing the events, and we analyze the running times of our algorithms. Combining this with our
result from Corollary~\ref{cor:compute_all_reaches} gives us an
efficient algorithm to compute the optimal 2-coverable
subtrajectory. Section~\ref{apx:reach_f} is dedicated to reach events, Section~\ref{apx:bounding_box_f} to bounding box events, Section~\ref{apx:bridge_f} to bridge events, Section~\ref{apx:upper_envelope_f} to upper envelope events, and Section~\ref{apx:special_configuration_f} to special configuration events. 

\subsubsection{Reach events}
\label{apx:reach_f}

\begin{lemma}
Given a trajectory with $n$ vertices, there are at most $O(n)$ reach events which can be computed in $O(n \log^2 n)$ time.
\end{lemma}
\begin{proof}
Suppose $p$ is a reach event and $r(p)$ is a vertex. The vertex $r(p)$ uniquely defines $p$ since it is the earliest point on the trajectory $\mathcal{T}$ that reaches $r(p)$. Since there are $n$ vertices, there are at most $O(n)$ reach events.

The running time is immediately implied by Corollary~\ref{cor:compute_all_reaches}, as we are computing the reaches of all the vertices.
\end{proof}

\subsubsection{Bounding box events}
\label{apx:bounding_box_f}

\begin{lemma} \label{lem:bounding_box_f}
Given a trajectory with $n$ vertices, there are at most $O(n)$ bounding box events.
\end{lemma}
\begin{proof}
Suppose $p$ is such a bounding box event. Let the first and last vertices of $\mathcal{T}$ in the subtrajectory $\mathcal{T}[p, r(p)]$ be~$v_i$ and~$v_j$. We prove that the pair of vertices~$v_i$ and~$v_j$ uniquely determines $p$. Then we prove that there are at most $O(n)$ possible choices of the pair $v_i$ and $v_j$.

Suppose~$v_i$ and~$v_j$ are given. Let~$v_{i-1}$ be the vertex preceding~$v_i$, then $p$ must lie on the segment~$v_{i-1}v_i$. The vertex~$v_L$ is the unique leftmost vertex between~$v_i$ and~$v_j$. Now,~$v_L$ determines the $x$-coordinate of $p$, and since $p$ lies on ~$v_{i-1}v_i$, we have the unique position for $p$. Therefore, the vertices~$v_i$ and~$v_j$ uniquely determine $p$.

Analogous to in Section~\ref{sec:A_Longest_1-coverable_subtrajectory}
there are $O(n)$ relevant pairs of vertices $v_i$, $v_j$ that we have
to consider, and each pair $(v_i, v_j)$ uniquely determines the
bounding box event $p$, we have that there are at most $O(n)$ bounding
box events.
\end{proof}

\begin{lemma} \label{lem:bounding_box_g}
Given a trajectory with $n$ vertices, one can compute all bounding box events in $O(n \log^2 n)$ time.
\end{lemma}
\begin{proof}
From the proof in Lemma~\ref{lem:bounding_box_f}, we know that $p$ can be determined by the pair $(v_i, v_j)$. We also proved the relationship between the pair $(v_i, v_j)$ and the longest coverable vertex-to-vertex subtrajectory starting at either~$v_i$ or ending at~$v_j$. As a consequence of Lemma~\ref{lem:reach_data_structure}, we can compute all longest coverable vertex-to-vertex subtrajectories starting at each vertex~$v_i$ in $O(n \log^2 n)$ time. Those ending at vertex~$v_j$ can be handled analogously. 

For each of the $O(n)$ pairs of vertices $(v_i, v_j)$ we use the same method as in Lemma~\ref{lem:bounding_box_f} to determine the bounding box event~$p$. We use the bounding box data structure to query~$v_L$ in $O(\log n)$ time. This determines the $x$-coordinate of~$p$. Then we compute~$p$ by computing the intersection of the two lines: the vertical line through~~$v_L$ and the trajectory edge~$v_{i-1}v_i$.
\end{proof}

\subsubsection{Bridge events}
\label{apx:bridge_f}

\begin{lemma}
Given a trajectory with $n$ vertices, there are at most $O(n)$ bridge events.
\end{lemma}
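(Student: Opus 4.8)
The plan is to adapt the counting argument for bounding box events from Appendix~\ref{apx:bounding_box_f} almost verbatim: I will show that each bridge event $p$ is uniquely determined by an ordered pair of vertices $(v_i, v_j)$ of $\mathcal{T}$, and then bound the number of admissible pairs by $O(n)$.

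First I would fix a bridge event $p$ and set $q = r(p)$, letting $v_i$ and $v_j$ be the first and last vertices of $\mathcal{T}$ lying on $\mathcal{T}[p, q]$; these exist because Definition~\ref{defn:bridge_event} refers to a lowest vertex of $\mathcal{T}[p,q]$. Bridge events that are themselves vertices number at most $n$, so I may assume $p$ is not a vertex, in which case $p$ lies in the interior of the edge $v_{i-1}v_i$ (as $v_i$ is the first vertex after $p$, no vertex separates $p$ from $v_i$). I would then reconstruct $p$ from the pair: the lowest vertex $v_{\min}$ of $\mathcal{T}[p,q]$ is simply the lowest of $v_i,\ldots,v_j$, hence determined by $(v_i,v_j)$; and combining the last two bullets of Definition~\ref{defn:bridge_event} — that $p$ is one unit horizontally to the left of $u$, and that $u$ is one unit above $v_{\min}$ — forces $p_y = (v_{\min})_y + 1$. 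Intersecting the horizontal line $y = (v_{\min})_y + 1$ with the edge $v_{i-1}v_i$ then pins down $p$ uniquely, so the pair $(v_i, v_j)$ determines the bridge event.

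Next I would bound the number of admissible pairs exactly as in the bounding box case. Since $\mathcal{T}[v_i,v_j]\subseteq\mathcal{T}[p,q]$ it is coverable, while $\mathcal{T}[v_{i-1},v_{j+1}]\supseteq\mathcal{T}[p,v_{j+1}]$ is not coverable, because $v_{j+1}$ lies strictly beyond $q=r(p)$ and $v_{i-1}$ strictly before $p$. A pair with ``$\mathcal{T}[v_i,v_j]$ coverable but $\mathcal{T}[v_{i-1},v_{j+1}]$ not coverable'' is, depending on whether $\mathcal{T}[v_i,v_{j+1}]$ is coverable, either the longest coverable vertex-to-vertex subtrajectory starting at $v_i$, or the longest one ending at $v_{j+1}$; in either case there are only $O(n)$ such pairs. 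Combining this with the uniqueness step yields $O(n)$ bridge events, and the three analogues in the remaining cardinal directions add only a constant factor.

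The step I expect to be the main obstacle is the uniqueness reconstruction under degeneracies: if the edge $v_{i-1}v_i$ happens to be horizontal at height $(v_{\min})_y + 1$, the intersection is not a single point, and I would instead invoke the first bullet of Definition~\ref{defn:bridge_event} (that $p$ is the leftmost point of $\mathcal{T}[p,q]$) to select $p$, or simply appeal to general position. A minor point to check separately is the boundary behaviour when $v_j$ is the last vertex of $\mathcal{T}$, so that $v_{j+1}$ does not exist; such cases correspond to reaches ending at the trajectory's endpoint and contribute only $O(1)$ additional events.
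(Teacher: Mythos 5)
Your overall strategy---charging each bridge event to the pair $(v_i,v_j)$ of first and last vertices of $\mathcal{T}[p,r(p)]$, reusing the $O(n)$ bound on such pairs from the bounding box argument, and showing that each pair accounts for only constantly many events---is exactly the structure of the paper's proof, and your re-derivation of the $O(n)$ bound on pairs is correct. The gap is in your per-pair uniqueness step. You read the last two bullets of Definition~\ref{defn:bridge_event} as forcing $p_y = (v_{\min})_y + 1$. In this paper, however, ``one unit to the left of'' refers to a displacement in $x$-coordinate only: it means $u_x = p_x + 1$, with no relation between $p_y$ and $u_y$. This is how the phrase is used throughout (see Property~\ref{prop:3} and Appendix~\ref{apx:upper_envelope_f}, where $u(p)$ is the point on the upper envelope of $\mathcal{T}[p,r(p)]$ at $x$-coordinate $p_x+1$, and the discussion after Definition~\ref{defn:bridging_point}, which says the bridging point is ``one unit away from the starting point $p$ in $x$-coordinate''). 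So the two bullets pin down $u$---namely $u_x = p_x + 1$ and $u_y = (v_{\min})_y + 1$---and say nothing about $p_y$. Consequently your step of intersecting the horizontal line $y = (v_{\min})_y + 1$ with the edge $v_{i-1}v_i$ does not locate $p$ at all: it is $u$, not $p$, that lies at that height.

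Under the correct reading the reconstruction is genuinely harder: the subtrajectory may cross the line $y = (v_{\min})_y + 1$ many times, so a priori each pair $(v_i,v_j)$ yields many candidates for $u$, and hence many candidates for $p$ (via $p_x = u_x - 1$ on edge $v_{i-1}v_i$). The paper closes exactly this hole by constraining $u$ rather than $p$: it argues that the bridge point either lies in $\mathcal{T}[v_i,v_j]$, in which case it is the unique admissible point on the upper envelope of $\mathcal{T}[v_i,v_j]$, or else it lies on one of the two partial edges $v_{i-1}v_i$ or $v_jv_{j+1}$, giving at most two further candidates; this yields the constant-per-pair bound. Your proof needs this step (or some substitute for it), so as written the counting does not go through, even though the pair-charging framework around it is sound.
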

\begin{proof}
Let the first and last vertices in the subtrajectory $\mathcal{T}[p, r(p)]$ be~$v_i$ and~$v_j$. By Lemma~\ref{lem:bounding_box_f},

there are $O(n)$ relevant pairs of vertices $(v_i, v_j)$. It suffices to show that for each pair $(v_i, v_j)$ there are only a constant number of bridge events. The pair $(v_i, v_j)$ determines the leftmost vertex~$v_L$ on $\mathcal{T}[v_i, v_j]$. If $x$ is in $\mathcal{T}[v_i, v_j]$ then it is the unique point on the upper envelope of $\mathcal{T}[v_i, v_j]$ one unit to the right of~$v_L$. Otherwise, $x$ is on~$v_{i-1}v_i$ or~$v_j v_{j+1}$ and one unit to the right of~$v_L$. There are at most two possible positions for the bridging point $x$. Thus there are a constant number of bridge events for each of the $O(n)$ pairs $(v_i, v_j)$.
\end{proof}

\begin{lemma}
Given a trajectory with $n$ vertices, one can compute all bridge events in $O(n \log^2 n)$ time.
\end{lemma}
\begin{proof}
In a similar manner to the proof of Lemma~\ref{lem:bounding_box_g}, we
begin by computing all pairs $(v_i, v_j)$  in $O(n \log^2 n)$
time. For each pair $(v_i, v_j)$ we compute the vertex~$v_L$ in
$O(\log n)$ time with the bounding box data structure. Consider two
cases. If $x$ is in $\mathcal{T}[v_i, v_j]$ we query the upper
envelope of $\mathcal{T}[v_i,v_j]$ in $O(\log n)$ time with the upper
envelope data structure. Otherwise, if $x$ is not in $\mathcal{T}[v_i,
v_j]$, then $x$ is on~$v_{i-1}v_i$ or~$v_j v_{j+1}$ and we can compute
the intersection in $O(1)$ time using $\BB(\T[v_i,v_j])$. Therefore, the running time is $O(n \log^2 n)$ time in total.
\end{proof}

\subsubsection{Upper envelope events}
\label{apx:upper_envelope_f}

\begin{lemma} \label{lem:upper_envelope_f}
Given a trajectory with $n$ vertices, there are $O(n 2^{\alpha(n)})$ upper envelope events of $\mathcal{T}$, where~$\alpha$ is the inverse Ackermann function.
\end{lemma}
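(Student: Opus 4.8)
The plan is to encode the upper envelope events as a Davenport--Schinzel sequence and then invoke the classical bound $\lambda_4(m) = O(m\, 2^{\alpha(m)})$. First I would order the upper envelope events $p_1,\dots,p_k$ by their position along $\mathcal{T}$, and associate to each event $p_i$ the edge $e$ of $\mathcal{T}$ that contains the corresponding point $u(p_i)$, i.e.\ the point one unit to the right of $p_i$ that lies on the upper envelope of $\mathcal{T}[p_i,r(p_i)]$. Treating the $n-1$ edges of $\mathcal{T}$ as symbols, this produces a sequence $U=a_1,\dots,a_k$ over $n-1$ symbols; after merging consecutive events whose $u$-points share an edge (so that no two adjacent symbols are equal), it suffices to prove that $U$ has no alternating subsequence $a,b,a,b,a,b$. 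This certifies that $U$ is a Davenport--Schinzel sequence of order $s=4$, whose length is therefore $O(\lambda_4(n))=O(n\,2^{\alpha(n)})$, which matches the $O(n\beta_4(n))$ bound from the table since $\beta_4(n)=\lambda_4(n)/n=\Theta(2^{\alpha(n)})$.

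The core of the argument is to forbid a length-six alternation. The key geometric lemma I would prove first is that whenever the pattern $a,b,a$ occurs as a (not necessarily contiguous) subsequence of $U$, the starting points of the events corresponding to the first $a$ and to the $b$ are $x$-monotone. The intuition is that $u(p)=(p_x+1,p_y)$ together with $u(p)$ lying on a fixed edge forces $p_y$ to be a linear function of $p_x$; combined with $p$ being the leftmost point of $\mathcal{T}[p,r(p)]$ and with $u(p)$ lying on the upper envelope, this pins down the direction in which a starting point may move while its $u$-point stays on a given edge. Granting this lemma, a length-six alternation $a,b,a,b,a,b$ yields an alternating, $x$-monotone sequence of five starting points, and I would then derive a contradiction from the fact that these $x$-coordinates cannot simultaneously respect the monotonicity forced by each $a,b,a$ sub-pattern while also alternating.

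I expect the main obstacle to be exactly this monotonicity-to-contradiction step: making precise what ``$x$-monotone starting points'' means, proving that an $a,b,a$ pattern forces it, and checking that the resulting length-five alternation is incompatible with the definitions of reach and of the upper envelope event. Once that combinatorial-geometric core is established, the rest is routine bookkeeping---verifying that each event contributes a single well-defined symbol (treating a $u(p)$ that happens to be a trajectory vertex by charging it consistently to one incident edge), confirming the no-adjacent-duplicates condition, and substituting $\lambda_4(n)=O(n\,2^{\alpha(n)})$ to conclude that there are $O(n\,2^{\alpha(n)})$ upper envelope events.
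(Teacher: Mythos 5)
Your skeleton is exactly the paper's: order the upper envelope events along $\mathcal{T}$, map each event $p$ to the edge containing $u(p)$, show the resulting sequence is Davenport--Schinzel of order $4$ via an ``$a,b,a$ forces $x$-monotonicity'' lemma, and conclude with $\lambda_4(n)=O(n\,2^{\alpha(n)})$. However, the two steps you defer are precisely the substance of the proof, and the routes you sketch for them do not work. First, your intuition for the monotonicity lemma rests on the false identity $u(p)=(p_x+1,p_y)$: the point $u(p)$ is one unit to the right of $p$ in $x$-coordinate only; its $y$-coordinate is dictated by the upper envelope of $\mathcal{T}[p,r(p)]$, not by $p$. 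The paper's actual proof of this lemma is not about linearity at all --- it is a pure ordering argument along the trajectory. If $p_1\prec p_2\prec p_3$ realize the pattern $a,b,a$ and $p_2$ were to the left of $p_1$, then since $p_1$ is the leftmost point of $\mathcal{T}[p_1,r(p_1)]$ (the first condition of an upper envelope event), $p_2$ cannot lie on that subtrajectory, so $r(p_1)\prec p_2$; combined with $p\prec u(p)\prec r(p)$ this gives $u(p_1)\prec r(p_1)\prec p_2\prec p_3\prec u(p_3)=u(p_1)$, a contradiction. Any correct proof must use the leftmost-point property and the reach in this way; linearity of the edge containing $u(p)$ gives you nothing here.

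Second, and more seriously, your proposed final step --- deriving a contradiction ``from the fact that these $x$-coordinates cannot simultaneously respect the monotonicity forced by each $a,b,a$ sub-pattern while also alternating'' --- does not exist as stated. Applying the monotonicity lemma to consecutive triples of $a,b,a,b,a,b$ yields only $p_1<p_2<p_3<p_4<p_5$ in $x$-coordinate, and there is no inconsistency whatsoever between $x$-monotone starting points and an alternating \emph{symbol} sequence; alternation constrains which edges the $u$-points lie on, not the order of the $p$'s. The contradiction in the paper is geometric, not combinatorial: since the $u$-points are also $x$-monotone and $u(p_1),u(p_5)$ both lie on the straight segment $a$, segment $a$ spans the whole $x$-interval in question; at $u(p_2)$ and $u(p_4)$ the segment $b$ lies on the upper envelope and hence above $a$; straightness of $a$ and $b$ then forces $b$ to lie above $a$ at the intermediate $x$-coordinate of $u(p_3)$, contradicting the fact that $u(p_3)$ lies on $a$ and on the upper envelope there. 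Without this above/below argument (or a substitute for it), your plan stalls exactly at the point you flagged as the main obstacle, so the proposal as written has a genuine gap.
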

\begin{proof}
For each upper envelope event $p$ of the trajectory $\mathcal{T}$, let $u(p)$ be the segment of $\mathcal{T}$ on the upper envelope of $\mathcal{T}[p,r(p)]$ that is one unit to the right of~$p$. If there are multiple such segments, take any of them. As $p$ ranges from the earliest upper envelope event to the last one, $u(p)$ is a sequence of segments. It suffices to show that $u(p)$ is bounded from above by $O(n 2^{\alpha(n)})$. We achieve this by showing that the sequence of segments $u(p)$ is a Davenport-Schinzel sequence of order $s=4$~\cite{DBLP:books/daglib/davenportschinzel}.

Recall that a Davenport-Schinzel sequence of order $s=4$ has no alternating subsequences of length $s+2 = 6$. The subsequence cannot occur anywhere in the sequence even for non-consecutive appearance of the terms. Our first step is to show that if the sequence $a,b,a$ occurs (not necessarily consecutively) then the first two elements of the sequence must be $x$-monotone, in that the first element is to the left of the second element. Our second step is to deduce a contradiction from an alternating and $x$-monotone subsequence of length five.

Suppose that $a,b,a$ is a subsequence of $u(p)$, then there exists three upper envelope events~$p_1 \prec p_2 \prec p_3$ along the trajectory~$\mathcal{T}$ so that $u(p_1), u(p_2), u(p_3) = a,b,a$. In other words, segment $a = u(p_1) = u(p_3)$ whereas segment $b = u(p_2)$. Suppose for the sake of contradiction that~$p_2$ is to the left of~$p_1$. See Figure~\ref{fig:davenport_schinzel}.

\begin{figure}[ht]
    \centering
    \includegraphics{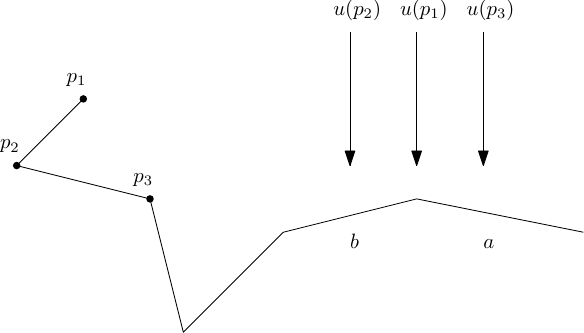}
    \caption{Three upper envelope events $p_1, p_2, p_3$ so that $u(p_1), u(p_2), u(p_3)$ are alternating.}
    \label{fig:davenport_schinzel}
\end{figure}

Recall that since~$p_1$ is an upper envelope event, $p_1$~is the leftmost point of $\mathcal{T}[p_1, r(p_1)]$. But~$p_2$ is to the left of $p_1$, so we must have that~$p_2 \not \in \mathcal{T}[p_1, r(p_1)]$, and therefore $r(p_1) \prec p_2$. Moreover, for any point $p$, we have $p \prec u(p) \prec r(p)$. Combining these, we get:
$$ u(p_1) \prec r(p_1) \prec p_2 \prec p_3 \prec u(p_3) = u(p_1).$$

This is a contradiction, so~$p_1$ is to the left of~$p_2$. Therefore, whenever the alternating subsequence $u(p_1), u(p_2), u(p_3) = a,b,a$ occurs, the first two elements~$p_1$ and~$p_2$ are $x$-monotone.

Now suppose we have an alternating subsequence $a,b,a,b,a,b$ of length 6. Let the subsequence be $u(p_1), u(p_2), u(p_3), g(p_4), g(p_5), g(p_6)$. By the property above, we have that $p_1, p_2, p_3, p_4$ and $p_5$ are $x$-monotone. Since $u(p_1) = g(p_5) = a$, the segment $a$ spans the entire $x$-interval from $u(p_1)$ to $g(p_5)$. But now $u(p_2) = g(p_4) = b$, which means that segment $b$ is above segment $a$ at $u(p_2)$ and $g(p_4)$. Since $a$ and $b$ are straight, this implies that $b$ is also above $a$ at $u(p_3)$. But $u(p_3) = a$, which is a contradiction. Therefore the alternating subsequence of length 6 does not occur and $u(p)$ is a Davenport-Schinzel sequence of order $s=4$.
\end{proof}

\begin{lemma} \label{lem:upper_envelope_g}
  Given a trajectory with $n$ vertices, one can compute all upper
  envelope events in $O(n\alpha(n) \log^2 n)$ time.
\end{lemma}
\begin{proof}
We begin with a preprocessing step. We compute a set $S$ of all vertex events, reach events, and bounding box events of $\mathcal{T}$. Since there is one reach event per vertex there are $O(n)$ reach events, and combined with Lemma~\ref{lem:upper_envelope_f},
this means that $S$ has size $O(n)$. 

The set $S$ has three properties. The first property is that between any two consecutive events~$s_i$ and~$s_{i+1}$, the trajectory $\mathcal{T}$ is a straight segment, since all vertices of $\mathcal{T}$ are in $S$. The second property is that for the set of points $p \in \T[s_i, s_{i+1}]$, their set of reaches $\{r(p): p \in \T[s_i, s_{i+1}]\}$ must lie on a straight segment of $\mathcal{T}$. The reason for this is that if there were a vertex strictly between $r(s_i)$ and $r(s_{i+1})$, then there would be a (reach) event between~$s_i$ and~$s_{i+1}$, contradicting the fact that~$s_i$ and~$s_{i+1}$ are consecutive. Finally, the third property is that, supposing~$s_i$ is the leftmost point on the subtrajectory $\mathcal{T}[s_i, r(s_i)]$, then for any $p \in \T[s_i, s_{i+1}]$, $p$ is the leftmost point on the subtrajectory $\mathcal{T}[p, r(p)]$. The reason for this is that if there were $p$ that had a vertex $v$ to the left of $p$, then there would be a bounding box event strictly between~$s_i$ and~$s_{i+1}$.

Next, we extend these properties of $S$ to properties of upper envelope events that are between~$s_i$ and~$s_{i+1}$. Let~$v_i$ and~$v_j$ be the first and last vertices of $\mathcal{T}[p, r(p)]$ for some point $p \in \T[s_i, s_{i+1}]$. As a consequence of the first two properties of set $S$, the vertices~$v_i$ and~$v_j$ are the same regardless of our choice of point $p$. Now suppose that $p$ is an upper envelope event. This means that $p$ is to the left of all vertices on the subtrajectory $\mathcal{T}[v_i, v_j]$. As a consequence of the third property of set $S$, both~$s_i$ and~$s_{i+1}$ have $x$-coordinate less than or equal to the $x$-coordinate of all vertices of the subtrajectory $\mathcal{T}[v_i, v_j]$.

Now the algorithm is to take each pair of consecutive events $(s_i, s_{i+1})$ and compute the upper envelope events that occur between~$s_i$ and~$s_{i+1}$. We decide on a subset of these pairs $(s_i, s_{i+1})$ to skip, since they will have no upper envelope events. For each pair of consecutive events $(s_i, s_{i+1})$, compute the vertices~$v_i$ and~$v_j$ (which are the first and last vertices of $\mathcal{T}[p, r(p)]$ for any point $p \in \T[s_i, s_{i+1}]$). From the definition of an upper envelope event we have that  the first requirement on an upper envelope $p$ implies that $p$ is to the left of the entire subtrajectory $\mathcal{T}[v_i, v_j]$. This implies that if the segment $s_is_{i+1}$ is not entirely to the left of $\mathcal{T}[v_i, v_j]$, we can skip the pair $(s_i, s_{i+1})$. 

The second requirement is that $p$ is one unit to the right of an inflection point $u$ on the upper envelope of $\mathcal{T}[v_i, v_j]$. In particular, if $x_i$ and $x_{i+1}$ are the $x$-coordinates of~$s_i$ and~$s_{i+1}$, then computing the upper envelope events $p$ in the vertical strip $[x_i, x_{i+1}]$ is equivalent to computing the inflection points $u$ in the vertical strip $V=[x_i+1, x_{i+1}+1]$.

Our problem is now to compute the upper envelope of $\mathcal{T}[v_i, v_j]$ in the vertical strip~$V$. For each of the $O(\log n)$ canonical subsets of the subtrajectory $\mathcal{T}[v_p, v_q]$, we compute the upper envelope $\Gamma_i$ for that canonical subset. The upper envelope of $\mathcal{T}[v_i, v_j]$ is simply the upper envelope of the $O(\log n)$ upper envelopes $\Gamma_i$. In order to argue amortised complexity for computing the upper envelope of the $\Gamma_i$'s, we proceed with a sweepline algorithm. 

Suppose our vertical sweepline is~$\ell$. Let its initial state $\ell_{start}$ be the left boundary of~$V$, and its ending state $\ell_{end}$ be the right boundary of~$V$. We maintain three invariants for the sweepline~$\ell$. First, we maintain pointers~$p_i$ to mark the positions and directions of each of the $\Gamma_i$. Second, we maintain the current highest of the pointers~$p_i$, which we will call~$p_{max}$. Finally, we maintain possible intersections where~$p_{max}$ may change, as such we maintain the intersection of~$p_{max}$ with each other~$p_i$. See Figure~\ref{fig:gamma_i}.

\begin{figure}[ht]
    \centering
    \includegraphics{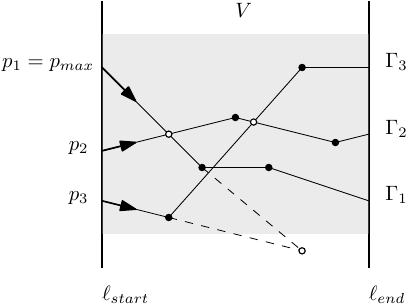}
    \caption{Sweepline maintains pointers $p_i$. Solid dots: $p_i$ changes. Hollow dots: $p_{max}$ swaps.}
    \label{fig:gamma_i}
\end{figure}

There are two types of sweepline events. The first type of sweepline event occurs when a pointer~$p_i$ changes. These sweepline events are marked with solid dots in Figure~\ref{fig:gamma_i}, and are the inflection points of $\Gamma_i$. In our update step, we update the pointer~$p_i$ and the intersection(s) between~$p_{max}$ and~$p_i$. The second type of sweepline event occurs~$p_{max}$ changes, in particular when it swaps with some other pointer~$p_i$. These intersection points are marked with hollow dots in Figure~\ref{fig:gamma_i}. In our update step, we update~$p_{max}$ and all intersections between~$p_{max}$ and~$p_i$.

Once the sweepline algorithm terminates, the segments traced by the pointer $p_{max}$ corresponds to the upper envelope of~$\mathcal{T}[v_i, v_j]$. We compute the inflection points along~$p_{max}$ and our algorithm returns all upper envelope events $p$ on~$s_i s_{i+1}$ which are one unit to the left of an inflection point.

It remains to analyse the amortised running time of this algorithm. By Corollary~\ref{cor:compute_all_reaches} we can compute a reach event for each vertex in $O(n \log^2 n)$ time. By Lemma~\ref{lem:bounding_box_g} we can compute all bounding box events in $O(n \log^2 n)$. We construct the upper envelope of all canonical subsets of $\mathcal{T}$ in $O(n \log^2 n)$ time~\cite{DBLP:journals/ipl/Hershberger89}. We initialise the sweepline algorithm and compute all $O(\log n)$ pointers in $O(\log^2 n)$ time. When the direction of a pointer changes, we update the pointer in constant time, and calculate the new intersections between~$p_{max}$ and~$p_i$. Since each new intersection can be computed in constant time, and there are $O(\log n)$ intersections to calculate, this step takes $O(\log n)$ time. When the highest pointer~$p_{max}$ changes, we update~$p_{max}$ in constant time, and calculate new intersections in $O(\log n)$ time. Therefore, the amortised running time of the sweepline algorithm is $O(\log n)$ per sweepline event. Hence, it suffices to count the number of sweepline events.

The first type of sweepline event is when the direction of the pointer~$p_i$ changes. The number of times a pointer~$p_i$ changes is equal to the number of inflection points of $\Gamma_i$ in the vertical strip~$V$. Suppose that we charge the sweepline event to that inflection point on~$\Gamma_i$. If we show that each inflection point on~$\Gamma_i$ gets charged at most once, not just during a single sweepline algorithm but in total across all pairs $(s_i, s_{i+1})$, then the total number of sweepline events of this type is bounded by the total complexity of all the $\Gamma_i$'s. The total complexity of all upper envelopes of all canonical subsets of the trajectory is $O(n \alpha(n) \log n)$~\cite{DBLP:journals/ipl/Hershberger89}.

Suppose for a sake of contradiction that two sweepline events charge to the same inflection point $u$. Since the sweepline algorithm sweeps from left to right without backtracking, these two sweepline events must have originated from two different pairs. 

\begin{figure}[ht]
    \centering
    \includegraphics{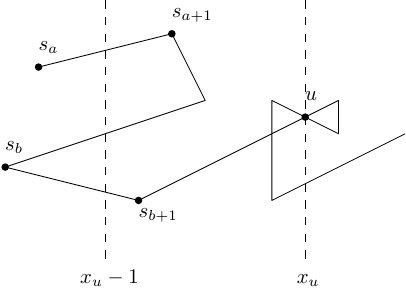}
    \caption{If~$s_b$ is after~$s_{a+1}$, then~$s_b$ contradicts the leftmost point property of~$s_{a+1}$.}
    \label{fig:double_charging_u}
\end{figure}

Suppose that the inflection point $u$ is charged by sweepline events originating from pairs $(s_a, s_{a+1})$ and $(s_b, s_{b+1})$. Without loss of generality let~$s_a \prec s_{a+1} \prec s_b$ along the trajectory~$\mathcal{T}$. Refer to Figure~\ref{fig:double_charging_u}. We will show that this contradicts the third property of the set $S$, which states that~$s_{a+1}$ is the leftmost point on the subtrajectory $\mathcal{T}[s_{a+1}, r(s_{a+1})]$. To this end we will show that~$s_b$ is between~$s_{a+1}$ and $r(s_{a+1})$ along the trajectory $\mathcal{T}$, and that~$s_b$ is to the left of~$s_{a+1}$.

Note that $u$ occurs as a sweepline event for~$s_b$ so~$s_{a+1} \prec s_b \prec u$. Therefore~$s_{a+1} \prec s_b \prec u \prec r(s_{a+1})$. It remains to show that~$s_b$ is to the left of~$s_{a+1}$. Let the $x$-coordinate of $u$ be $x_u$ and consider the vertical line at $x$-coordinate $x_u - 1$, one unit to the left of $u$. Since both sweepline algorithms for $s_a$ and $s_b$ visited the inflection point $u$, we must have that the vertical line cuts~$s_a s_{a+1}$ and~$s_b s_{b+1}$ in such a way that $s_a$ and $s_b$ are to the left of the vertical line, whereas $s_{a+1}$ and $s_{b+1}$ are to the right of the vertical line. Therefore,~$s_b$ is to the left of~$s_{a+1}$, completing our proof by contradiction. Hence, no inflection point $u$ can be charged twice for the first type of sweepline event. 

The second type of sweepline event is when the highest pointer~$p_{max}$ changes. Every time the second type of sweepline event occurs, there is a new upper envelope event. Therefore, the number of events of the second type is bounded by the number of upper envelope events, which by Lemma~\ref{lem:upper_envelope_f}
is at most~$O(n 2^{\alpha(n)})$. Therefore, the number of sweepline events is dominated by the first type. 

The total running time of the sweepline algorithm is $O(\log n)$ time per sweepline event, which leads to $O(n \alpha(n) \log^2 n)$ time in total. Therefore, the overall running time of this algorithm $O(n \alpha(n) \log^2 n)$.
\end{proof}

\subsubsection{Special configuration events}
\label{apx:special_configuration_f}

We start by proving useful properties of consecutive vertices of $\T_2$.

\begin{lemma}
  \label{prop:2}
  Let $s_i$ and $s_{i+1}$ be a pair of consecutive vertices of
  $\mathcal{T}_2$. For all points $p \in \T[s_i, s_{i+1}]$, their set of
  reaches lie on a single edge $e$ of \T, i.e. $\{r(p) \mid p \in
  \T[s_i, s_{i+1}]\} \subseteq e$.
\end{lemma}

\begin{proof}
  Suppose there were a vertex $v$ in the set
  $\{r(p) \mid p \in \T[s_i, s_{i+1}]\}$. Then the point $p$ such that
  $r(p) = v$ would be a reach event, which would contradict that fact
  that $s_i$ and $s_{i+1}$ are consecutive vertices of
  $\mathcal{T}_2$, and hence all points in
  $\{r(p) \mid p \in \T[s_i, s_{i+1}]\}$ lie on an edge of \T.
\end{proof}

\begin{lemma}
  \label{prop:3}
  Let $s_i$ and $s_{i+1}$ be consecutive vertices of $\mathcal{T}_2$,
  and for any $p \in \T[s_i, s_{i+1}]$ let $u(p)$ be the point on the
  upper envelope of $\mathcal{T}[p, r(p)]$ that is one unit to the
  right of $p$. The set of points $\{u(p) \mid p \in \T[s_i,
  s_{i+1}]\}$ lies on a single edge of \T.
\end{lemma}

\begin{figure}[ht]
    \centering
    \includegraphics[width=\textwidth]{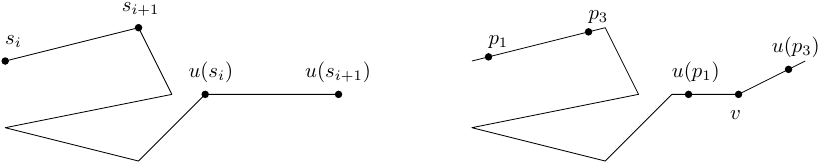}
    \caption{(Left) The consecutive vertices $s_i$ and $s_{i+1}$, and the points $u(s_i)$ and $u(s_{i+1})$ one unit to their right. (Right) If $u(p_1)$ and $u(p_3)$ are on different edges, then $p_1$ and $p_3$ are not consecutive.}
    \label{fig:lemma16_figure}
\end{figure}

\begin{proof}
  Assume for sake of contradiction that there are two points
  $p_1,p_3 \in \mathcal T[s_i, s_{i+    1}]$ for which $u(p_1)$ and $u(p_3)$ lie
  on different edges of $\mathcal T$. See Figure~\ref{fig:lemma16_figure}, (right). There must be a vertex $v$ of the
  upper envelope separating $u(p_1)$ and $u(p_3)$. As a result, it follows
  that there is a point $p_2 \in \mathcal T[p_1,p_3]$ for which
  $u(p_2)=v$. But now $p_2$ is a vertex of $\mathcal T_2$. This contradicts the fact that $p_1$ and $p_3$ are consecutive. Therefore, all $u(p)$ for $p \in
  \mathcal T[s_1,s_2]$ lie on a single edge of $\mathcal T$.
\end{proof}

\begin{lemma} \label{lem:special_configuration_f}
Given a trajectory with $n$ vertices, there are at most $O(n 2^{\alpha(n)})$ special configuration events.
\end{lemma}
\begin{proof}
We show the bound by showing that between any two elements of $\mathcal{T}_2$, there is either a unique special configuration event, or if there are multiple they are equivalent and we need only compute one of them. We show this by using Lemmas~\ref{prop:2} and~\ref{prop:3} of the trajectory $\mathcal{T}_2$. We require a rotated version of Lemma~\ref{prop:3} to hold for the left cardinal direction as well as the upward cardinal direction to bound the number of occurrences of special configuration~3. Now we consider three cases. 

\textbf{Special Configuration 1.} We use Lemma~\ref{prop:2}
of $\mathcal{T}_2$ to bound the number of special configuration events. We show that between consecutive events $s_i$ and $s_{i+1}$, there is either a unique instance of special configuration 1, or there are multiple instances of special configuration 1 which are all equivalent and we only need to compute one of them.

Let $e_p$ be the edge of $\mathcal{T}$ containing $p$ and let $e_q$ be the edge containing $q$. The segment $s_i s_{i+1}$ is a subset of $e_p$, and by Lemma~\ref{prop:2} of $\mathcal{T}_2$, the set of reaches $\{r(p): p \in \T[s_i, s_{i+1}]\}$ is a subset of $e_q$. Special configuration 1 states that the top-right corner of $\mathcal H_1$ lies on $e_p$ and the bottom-left corner of $\mathcal H_1$ lies on $e_q$. 

Let $p(t)$ be a function that slides the starting point $p$ from $s_i$ to $s_{i+1}$. Formally, let $p: [0,1] \to \T[s_i, s_{i+1}]$ be a linear function so that $p(0) = s_i$ and $p(1) = s_{i+1}$. Let $\mathcal H(t)$ be the unit sized square with its top-right corner at $p(t) \in e_p$. See Figure~\ref{fig:special3}. If $p(t)$ is in special configuration 1, then $\mathcal H(t)$ would also have its bottom-left corner on $e_q$. 

\begin{figure}[ht]
    \centering
    \includegraphics{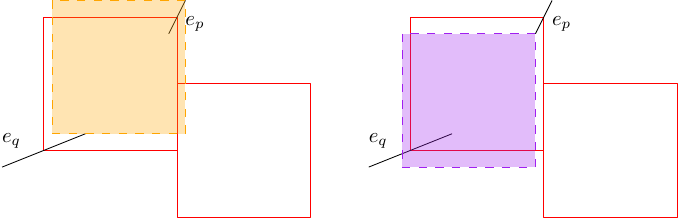}
    \caption{The square $\mathcal H(t)$ with top-right corner $p(t)$ on segment $e_p$.}
    \label{fig:special3}
\end{figure}

There are two cases. In the first case, $e_p$ and $e_q$ are not parallel. Then since $\mathcal H(t)$ moves parallel to $p(t) = e_p$, the bottom-left corner of $\mathcal H(t)$ moves parallel to $e_p$ with time. Therefore, the bottom-left corner of $\mathcal H(t)$ can only intersect $e_q$ once, and we have that between $s_i$ and $s_{i+1}$ there is a unique instance of special configuration 1.

Otherwise, $e_p$ and $e_q$ are parallel. Therefore, if it is true that $\mathcal H(t)$ has its bottom-left corner on $e_q$ for some value of $t$, then it is true for all values of $t \in [0,1]$. Moreover, $p(t)$ and $r(p(t))$ move along $e_p$ and $e_q$ at the same rate since they are opposite corners of a fixed sized square. We can deduce that $\mathcal{T}[p(t), r(p(t))]$ have the same length for all $t \in [0,1]$, in which case we only need to compute one such $p(t)$.

\textbf{Special Configuration 2.} We use Lemma~\ref{prop:3}
of $\mathcal{T}_2$ to show that it suffices to consider a unique instance of special configuration 2 between $s_i$ and $s_{i+1}$. Let $e_p$ be the segment of $\mathcal{T}$ containing $p$ and passing through the top-left corner of $\mathcal H_1$. Let $e_b$ be the segment of $\mathcal{T}$ that passes through the bottom-right corner of $\mathcal H_1$. 

The segment $s_i s_{i+1}$ is a subset of $e_p$. By Lemma~\ref{prop:3},
the set of points $\{u(p): p \in \T[s_i, s_{i+1}]\}$ is a subset of $e_b$. Let $p: [0,1] \to \T[s_i, s_{i+1}]$ be a linear function so that $p(0) = s_i$ and $p(1) = s_{i+1}$. Let $\mathcal H(t)$ be the unit sized square with its top-left corner at $p(t) \in e_p$. See Figure~\ref{fig:special2}.

\begin{figure}[ht]
    \centering
    \includegraphics{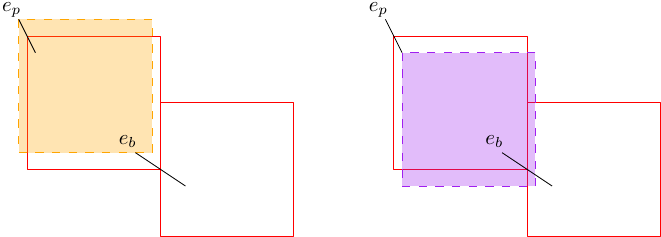}
    \caption{The square $\mathcal H(t)$ with top-left corner $p(t)$ on the segment $e_p$.}
    \label{fig:special2}
\end{figure}

If $p(t)$ were a special configuration event, then the bottom-right corner of $\mathcal H(t)$ is required to be on $e_b$. By the same reasoning as in special configuration 1, if $e_p$ and $e_b$ are not parallel, then there is at most one value of $t$ where this can hold. If $e_p$ and $e_b$ are parallel, then computing any candidate would suffice. Hence it suffices to consider a unique instance of special configuration 2 between $s_i$ and $s_{i+1}$.

\textbf{Special Configuration 3.} We use Lemmas~\ref{prop:2} and~\ref{prop:3}
of $\mathcal{T}_2$ to show that it suffices to consider a unique instance of special configuration 3 between $s_i$ and $s_{i+1}$. We use the Lemma~\ref{prop:3} 
for both the upward and leftward cardinal directions. 

Let $e_p$ be the segment of $\mathcal{T}$ that contains $p$ and passes through the top-left corner of $\mathcal H_1$. Let $e_q$ be the segment of $\mathcal{T}$ that contains $q$ and passes through the bottom-right corner of $\mathcal H_2$. Of the two distinct intersections of $\mathcal H_1$ and $\mathcal H_2$, let $e_{b_1}$ be the segment of $\mathcal{T}$ that passes through the intersection of the left edge of $\mathcal H_1$ with the top edge of $\mathcal H_2$, and let $e_{b_2}$ be the segment of $\mathcal{T}$ that passes through the intersection of the bottom edge of $\mathcal H_1$ with the left edge of $\mathcal H_2$. See Figure~\ref{fig:special1}.

Note that $s_i s_{i+1}$ is a subset of $e_p$. By Lemma~\ref{prop:2},
$\{r(p): p \in \T[s_i, s_{i+1}]\}$ is a subset of $e_q$. By Lemma~\ref{prop:3}
in the upward direction, $\{u(p): p \in \T[s_i, s_{i+1}]\}$ is a subset of $e_{b_1}$. If $l(p)$ is the point one unit below $p$ on the left envelope of $\mathcal{T}[p, r(p)]$, then by Lemma~\ref{prop:3} in the left direction, $\{l(p): p \in \T[s_i, s_{i+1}]\}$ is a subset of $e_{b_1}$.

Now let $p: [0,1] \to \T[s_i, s_{i+1}]$ be a linear function so that $p(0) = s_i$ and $p(1) = s_{i+1}$. Let $\mathcal H_1(t)$ be the unit sized square with its top-left corner at $p(t) \in e_p$. By the definition of an upper envelope event, 
$u(p(t))$ is one unit to the right of $p(t)$, and therefore $u(p(t))$ is the intersection of the right edge of $\mathcal H_1(t)$ and $e_{b_1}$. Similarly, $l(p(t))$ is the intersection of the bottom edge of $\mathcal H_1(t)$ and $e_{b_2}$. 

If $p(t)$ were a special configuration event, then there would exist a square $\mathcal H_2$ so that $u(p(t))$ is on the top edge of $\mathcal H_2$, $l(p(t))$ is on the left edge of $\mathcal H_2$, and $r(p(t))$ is in the bottom right corner of $\mathcal H_2$. Define $\mathcal H_2(t)$ to be the square with $u(p(t))$ on its top edge and $l(p(t))$ on its left edge. Then as $t$ varies linearly, $\mathcal H_1(t)$ moves linearly in the plane and therefore $u(p(t))$ and $l(p(t))$ move linearly along the segments $e_{b_1}$ and $e_{b_2}$. See Figure~\ref{fig:special1}. Therefore, $\mathcal H_2(t)$ moves linearly in the plane. For the same reason as in special configuration 1 and 2, it suffices to consider a unique position where the bottom-right corner of $\mathcal H_2(t)$ is on the segment $e_q$.

\begin{figure}[ht]
    \centering
    \includegraphics{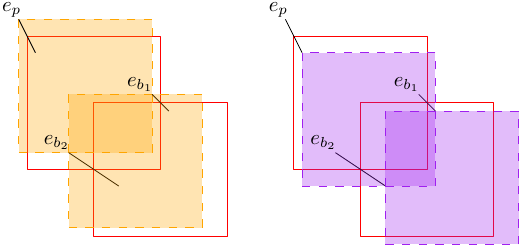}
    \caption{Sliding $\mathcal H_1$ and $\mathcal H_2$ along $e_p$, $e_{b_1}$ and $e_{b_2}$.}
    \label{fig:special1}
\end{figure}

\textbf{Summary.} In all special configurations there is a constant number of events between any two vertices of $\mathcal{T}_2$. Therefore, the number of vertices of $\mathcal{T}_2$ is an upper bound on the number of special configuration events up to a constant factor. By 

Lemma~\ref{lem:upper_envelope_f},
the number of vertices of $\mathcal{T}_2$ is at most $O(n 2^{\alpha(n)})$, so there are at most $O(n 2^{\alpha(n)})$ special configuration events in total.
\end{proof}

\begin{lemma} \label{lem:special_configuration_g}
  Given a trajectory
  with $n$ vertices, one can compute all special configuration events
  in $O(n2^{\alpha(n)} \log^2 n)$ time.
\end{lemma}
\begin{proof}
We use the same notation as in the proof of Lemma~\ref{lem:special_configuration_f}.
We compute the set $\mathcal{T}_2$. We take a pair of consecutive elements $s_i$ and $s_{i+1}$. We compute the segment $e_p$ that contains $\T[s_i, s_{i+1}]$. We use the reach data structure from Lemma~\ref{lem:reach_data_structure} to compute the reach of $p$ and therefore compute the segment $e_q$. We use the upper envelope data structure in Tool~\ref{tool:ue} to query $e_b$ (or both $e_{b_1}$ and $e_{b_2}$). We let $p: [0,1] \to e_p$ be the linear function defined in the proof of Lemma~\ref{lem:special_configuration_f}.

If we are in special configuration 1 or 2, we check if the translation
is parallel to $e_q$ or $e_b$ respectively, in which case we return
the first point of $\T[s_i, s_{i+1}]$. Otherwise, we compute the function $\mathcal H(t)$ of squares parametrised by $t$. The square $\mathcal H(t)$ has its top-right, or top-left corner at $p(t)$ for special configurations 1 and 2 respectively. Then we track the segment formed by the bottom-right corner of $\mathcal H(t)$ as we vary $t$. We return the value of $t$ where the bottom-right corner of $\mathcal H(t)$ lies on $e_q$.

If we are in special configuration 3, we compute the function $\mathcal H_1(t)$ of a square with its top-right corner on $p(t)$. Then we compute the intersections $u(p(t))$ and $l(p(t))$ of $\mathcal H_1(t)$ with $e_{b_1}$ and $e_{b_2}$ respectively. We let $\mathcal H_2(t)$ be the square with its top edge of $u(p(t))$ and its left edge of $l(p(t))$. We track the segment formed by the bottom-right corner of $\mathcal H_2(t)$ as we vary $t$. We return the value of $t$ where the bottom-right corner of $\mathcal H_2(t)$ lies on $e_q$. 

Now we analyse the running time of this algorithm. Building
Tool~\ref{tool:ue} takes $O(n\alpha(n)\log n)$ time. This is dominated
by the  $O(n \alpha(n) \log^2 n)$ time it takes to compute
$\mathcal{T}_2$ (Corollary~\ref{cor:compute_all_reaches} and
Lemma~\ref{lem:upper_envelope_f}).
Between each pair $(s_i, s_{i+1})$, we query the reach data structure and the upper envelope data structure, which takes $O(\log^2 n)$ and $O(\log n)$ time respectively. Constructing the functions $p(t)$, $\mathcal H_1(t)$, $u(p(t))$, $l(p(t))$ and $\mathcal H_2(t)$ are constant sized problems and only takes constant time. Therefore, the time to compute $\mathcal{T}_2$ is $O(n \alpha(n) \log^2 n)$ we spend $O(\log^2 n)$ query time for each element of $\mathcal{T}_2$. Since the size of $\mathcal{T}_2$ is $O(n 2^{\alpha(n)})$ by 

Lemma~\ref{lem:special_configuration_f}, 
the total running time of this algorithm is $O(n 2^{\alpha(n)} \log^2 n)$.
\end{proof}

\subsection{Summary}
\label{sub:summary_event_types}

We summarise the results of Sections~\ref{apx:reach_f}-\ref{apx:special_configuration_f} in the table below. Putting it all together, we obtain Theorem~\ref{thm:total_events}.

  \begin{tabular}{lll}
                                 & \#events             & computation
                                                          time \\
    \hline
    Vertex events                & $O(n)$            & $O(n)$ \\
    Reach events                 & $O(n)$            & $O(n \log^2 n)$ \\
    Bounding box events          & $O(n)$            & $O(n \log^2 n)$ \\
    Bridge events                & $O(n)$            & $O(n \log^2 n)$ \\
    Upper envelope events        & $O(n2^{\alpha(n)})$  & $O(n \alpha(n) \log^2 n)$ \\
    Special configuration events \hspace{5mm} & $O(n 2^{\alpha(n)})$ \hspace{5mm} & $O(n 2^{\alpha(n)} \log^2 n)$.\\
  \end{tabular}

\begin{theorem}
  \label{thm:total_events}
  The trajectory $\mathcal{T}_3$ has $O(n2^{\alpha(n)})$ vertices, and can be constructed in time $O(n2^{\alpha(n)}\log^2 n)$. 
\end{theorem}

\subsection{Computing the optimal subtrajectory}
\label{sub:Computing_a_longest_subtrajectory}

By Theorem~\ref{thm:starting_point_guaranteed_to_be_in_t3} there is a
longest $2$-coverable trajectory that starts at a point $p \in S$. By
Theorem~\ref{thm:total_events} this set $S$ has size
$m=O(n2^{\alpha(n)})$ and we can compute it in $O(n2^{\alpha(n)} \log^2 n)$
time. Using Corollary~\ref{cor:compute_all_reaches} we can compute a
longest 2-coverable subtrajectory starting at each point in $S$ in
$O(n\log n + m\log^2 n) = O(n2^{\alpha(n)}\log^2 n)$ time. We therefore
obtain the following result:

\begin{theorem}
  \label{thm:longest_2-coverable}
  Given a trajectory $\T$ with $n$ vertices, there is an
  $O(n2^{\alpha(n)}\log^2 n)$ time algorithm to compute a longest
  2-coverable subtrajectory of $\T$.
\end{theorem}

\section{Concluding Remarks}

We presented algorithms to decide if a set of segments is $k$-coverable for $k = 3, 4$, data structures for answering if subtrajectories are $k$-coverable for $k = 2, 3$, and algorithms to compute the longest $k$-coverable subtrajectory for $k=1,2$. One open problem is whether we can extend our algorithms to larger values of $k$. Another open problem is whether we can improve the bounds on the number of starting points of the longest 2-coverable subtrajectory, and whether we can compute them more efficiently. 

\bibliographystyle{plain}
\bibliography{bibliography.bib}

\end{document}